%% file: main.tex
\documentclass[a4paper,UKenglish,cleveref]{lipics-v2021}
\usepackage{bm}
\usepackage{todonotes}
\usepackage{mathtools}
\usepackage{thmtools}
\makeatletter
\renewcommand\thmt@autorefsetup{\@xa\def\csname\thmt@envname autorefname\@xa\endcsname\@xa{\thmt@thmname}}
\makeatother

\nolinenumbers

\newcommand{\R}{\mathbb R}
\newcommand{\Z}{\mathbb Z}

\newcommand{\nfold}{\textsc{n-fold}}
\renewcommand{\O}{\mathcal O}
\DeclareMathOperator{\conv}{conv}

\title{4-Block Integer Programming is in FPT}

\titlerunning{4-Block IP is in FPT}

\author{Martin Koutecký}{Charles University, Czechia}{koutecky@iuuk.mff.cuni.cz}{https://orcid.org/0000-0002-7846-0053}{}

\author{Alexandra Lassota}{Eindhoven University of Technology, Netherlands}{a.a.lassota@tue.nl}{https://orcid.org/0000-0001-6215-066X}{}

\author{Koen Ligthart}{Eindhoven University of Technology, Netherlands}{k.m.ligthart@tue.nl}{https://orcid.org/0009-0004-6823-5225}{}

\authorrunning{M. Koutecký, A. Lassota and K. Ligthart}

\Copyright{Martin Koutecký, Alexandra Lassota and Koen Ligthart} 

\ccsdesc[500]{Mathematics of computing~Combinatorial optimization}
\ccsdesc[500]{Theory of computation~Complexity classes}

\keywords{Integer Programming, 4-Block, convex separable objective, fixed-parameter tractability (FPT)}

\hideLIPIcs

\begin{document}

\maketitle

\begin{abstract}
\input{sections/abstract}
\end{abstract}

\input{sections/introduction}

\input{sections/convex-extensibility}

\input{sections/algorithm}

\section*{AI Statement}
GPT-6-Astra Pro was used to develop and prove \cref{lemma:2d-dilation,lemma:non-vertices-are-decomposable} on September 14, see {\footnotesize\url{https://research.koutecky.name/gptpro/app/#/page/4-Block%20Integer%20Programming%20FPT}}. The paper is written by the authors without LLMs. The authors take full responsibility for its correctness and for all other contents of the paper.

\bibliography{bib}

\end{document}

%% file: sections/abstract.tex
Integer programming is a fundamental problem with rich theory and numerous applications. Solving integer programs (IPs) is one of Karp's 21 NP-hard problems. Together, this motivated extensive efforts in identifying and studying several tractable subclasses.
One of the top three unresolved complexity questions therein is the parameterized complexity of so-called 4-block IP, a natural class characterized by having a diagonal matrix with small blocks after  deleting few rows and columns.
4-block IPs were introduced by Hemmecke et al.~[IPCO 2010] who showed that if the block sizes and the number of rows and columns specified before are bounded by $k$, the largest coefficient of the constraint matrix by $\Delta$, and $n$ denotes the number of variables, there is an $n^{g(k,\Delta)} \cdot L^{\O(1)}$ algorithm for some function $g$ and encoding length of the instance $L$.
Over the years, significant progress resulted in improvements to the function $g$, but the approaches inherently could not answer the question whether there is an algorithm with complexity $g(k,\Delta)\cdot L^{\O(1)}$, called an \emph{FPT} algorithm. This question is repeatedly highlighted, most recently in a survey of Koutecký~[IPEC 2025] and by Eisenbrand and Rothvoss~[SODA 2026].

We resolve this question in the positive by providing an algorithm solving general 4-block integer programs that runs in time
\[
    2^{(k\Delta)^{\O(k^2)}}\cdot n\log^{\O(1)}(n) \cdot \log^{\O(1)} {\cal D}
\]
where $\mathcal{D}$ is the maximum variable domain length.
Our algorithm is optimal in several ways: it also extends to non-linear, separable convex objective functions, constraint matrices which are obtained by taking any recursively block-structured matrix (such as tree-fold or multi-stage) and appending few ``global'' columns to it, and it also allows large coefficients (unbounded by $\Delta$) in those columns.
It is known that tractability cannot be extended further in any of those directions.
The runtime also (nearly) matches the known doubly exponential running time lower bound.

The key structural property that we establish is that a function $f\colon\Z^n\to\R$ that is integer midpoint convex, i.e., $f(x)\le\tfrac12f(x-p)+\tfrac12f(x+p)$ for all $x,p\in\Z^n$, can be extended to a convex function on the set $2d\Z^n\cap L$ if $L$ is a linear subspace of dimension $d$. This closes the gap in a recent work by Ligthart [arXiv 2606.30330, 2026], which allows us to extend the previous algorithm that solves 4-block integer programs with a single global variable to 4-block integer programs that have a parameterized number of global variables.

%% file: sections/introduction.tex
\section{Introduction}
Solving integer programs (IPs) efficiently is one of the most fundamental problems in theoretical computer science. It is a central and powerful modeling tool in discrete optimization and has become a state-of-the-art solution framework in many application domains, e.g., AI planning~\cite{BrielVossenKambhampati05,VossenBallLotemNau99}, vehicle routing~\cite{Toth01}, process scheduling~\cite{FloudasLin05}, and packing~\cite{LodiMM02}. In standard form, the goal is to solve
\begin{align*}
\min\{c^\top x \mid Ax=b,\; l\le x\le u,\; x\in\mathbb{Z}^n\},
\end{align*}
where $c\in\mathbb{Z}^n$ is the objective function, $A\in\mathbb{Z}^{m\times n}$ is the constraint matrix, $b\in\mathbb{Z}^m$ is the right-hand side vector, and $l,u\in\mathbb{Z}^n$ are lower and upper bounds on the variables.

Despite playing such a central role, solving the problem is NP-hard in general, already appearing as one of Karp's 21 NP-hard problems~\cite{Kar72}. To overcome this daunting complexity, tractable subclasses of integer programs have been intensively studied.
In the 1960s, Hoffman and Kruskal~\cite{hoffman2010integral} showed that IPs with $A$ totally unimodular can be solved by solving their LP relaxation, which gives a polynomial algorithm.
The natural generalization of this class to $\delta$-modular IPs, i.e., those where the largest subdeterminant of $A$ is bounded by $\delta$ in absolute value, has seen much progress recently: after the polynomial-time algorithm for the bimodular case $\delta=2$~\cite{ArtmannWZ:2017,VeselovC:2009}, progress has been made for $\delta\in\{3,4\}$ under a strict modularity assumption~\cite{NageleSZ22,NageleNSZ23}, for matrices with two nonzeros per row~\cite{FioriniJWY21}, and on the structure of $\delta$-modular polyhedra~\cite{GribanovMS24}, but the existence of an algorithm running in $g(\delta)n^{\O(1)}$ remains an important open problem.
In the 1980s, Lenstra~\cite{DBLP:journals/mor/Lenstra83} showed that IPs with few variables are tractable, and subsequent work improved the superexponential dependency on the dimension, so far culminating with a recent breakthrough by Reis and Rothvoss~\cite{DBLP:conf/focs/ReisR23}.
A major open problem is whether the naive complexity lower bound of $2^{\O(n)} L^{\O(1)}$ is achievable, with $L$ the length of the input in binary.

A third important class that emerged starting around 2000 are so-called block-structured integer programs; for a brief history, see the recent survey of Koutecký~\cite{Koutecky25}. In such IPs, the constraint matrix has a certain sparse structure. The most prominent of such matrices are \emph{$n$-fold} and \emph{2-stage stochastic} matrices, whose constraint matrices respectively look as follows:

\begin{equation}
\begin{aligned}
A_{\text{n-fold}} &=
\begin{bmatrix}
B_1&\cdots&B_n\\
D_1&&\\
&\ddots&\\
&&D_n
\end{bmatrix}
\qquad\qquad
A_{\text{2-stage stochastic}} =
\begin{bmatrix}
C_1&D_1&&\\
\vdots&&\ddots&\\
C_n&&&D_n
\end{bmatrix},
\end{aligned}
\label{eq:nfoldmatrix}
\end{equation}

where all ommitted entries are zero, and where the matrices are of dimensions $B_i\in\{-\Delta,-\Delta+1,\dots,\Delta\}^{r\times t}$, $C_i\in\mathbb{Z}^{s\times p}$, and $D_i\in\{-\Delta,-\Delta+1,\dots,\Delta\}^{s\times t}$ for $i\in[n]$.

These IPs arise naturally in a broad range of problems, see~\cite{ChenMYZ17,DeLoera2008,DBLP:journals/disopt/GavenciakKK22,JansenKMR22,KnopK18,KnopK22,KnopKLMO21,KnopKM20,KnopKM20b,KouteckyZ20} for just a few examples. Developing increasingly efficient algorithms for solving block-structured IPs has become a joint effort across multiple fields, leading to numerous papers that improve the running time gradually or allow for broader generalizations, see~\cite{AschenbrennerH07,CslovjecsekEHRW21,CslovjecsekEPVW21,DBLP:journals/theoretics/CslovjecsekKLPP25,DeLoera2008,EisenbrandHK18,monster,DBLP:conf/soda/EisenbrandR26,HemmeckeOR13,JansenLR20,DBLP:journals/mp/Klein22,KouteckyLO18,ligthart2026valuefunctionsseparableconvex}.
The natural generalization of both $n$-fold and 2-stage IPs is to allow \emph{both} ``global'' variables (like 2-stage IPs) and ``global'' constraints (like $n$-fold IPs).
This led Hemmecke et al.~\cite{HemmeckeKW14} to study 4-block $n$-fold IPs (from hereon simply 4-block IPs) and give an algorithm running in time $n^{g(p,r,s,t,\Delta)}$.
In the language of parameterized complexity, this is a so-called \emph{slice-wise polynomial}, or \emph{XP}, algorithm, whose existence always raises the question whether the generally faster \emph{fixed-parameter tractable} (\emph{FPT}) algorithm with complexity $g(p,r,s,t,\Delta)n^{\O(1)}$ exists.
This question was raised explicitly multiple times~\cite{FPTNews16,DBLP:journals/disopt/GavenciakKK22,KnopK18,Koutecky25}, most recently by Eisenbrand and Rothvoss~\cite[Conjecture 1]{DBLP:conf/soda/EisenbrandR26}.

The existence of an FPT algorithm for 4-block IP is arguably one of the top 3 most important open problems about the complexity of integer programs, together with the two problems mentioned above: existence of an FPT algorithm parameterized by the largest subdeterminant, and the existence of a $2^{\O(n)}L^{\O(1)}$ algorithm.
In this paper, we provide the first FPT algorithm for solving 4-block IPs, thereby answering the most important open question in block-structured integer programming.

\subsection*{Our contribution}
Formally, a \emph{4-block IP} is an integer program whose constraint matrix $A$ has a specific 4-block structure consisting of $p$ global variables, $r$ global constraints, and independent small block matrices along its diagonal. In particular, we have
\begin{equation}
A =
    \label{eq:4-block-matrix}
    \begin{bmatrix}
        A_0&B_1&\cdots&B_n\\
        C_1&D_1&&\\
        \vdots&&\ddots&\\
        C_n&&&D_n
    \end{bmatrix},
\end{equation}
where $A_0\in\mathbb{Z}^{r\times p}$, $C_i\in\mathbb{Z}^{s\times p}$, $B_i\in\{-\Delta,-\Delta+1,\dots,\Delta\}^{r\times t}$, and $D_i\in\{-\Delta,-\Delta+1,\dots,\Delta\}^{s\times t}$ for $i\in[n]$.
In a sense, the class of 4-block matrices is a minimal generalization of $n$-fold and 2-stage stochastic matrices.

The new structural result to obtain the desired algorithm is \cref{lemma:2d-dilation}. We consider functions $f\colon\Z^m\to\R$ that are \emph{integer midpoint convex}, i.e., that satisfy $f(x)\le\tfrac12f(x-p)+\tfrac12f(x+p)$ for all $x,p\in\Z^n$. We show that the restriction of such function to a linear subspace becomes convex extensible when inspected on a sufficiently dilated lattice.

\begin{restatable}{lemma}{lemmatwoddilation}
	\label{lemma:2d-dilation}
	Let $f\colon\Z^m\to\R$ be integer midpoint convex, $k\in\Z_{\ge0}$, and $S\subseteq\R^m$ be a linear subspace of dimension $d\le\tfrac12k$. Then $z\mapsto f(k\cdot z)$ is convex extensible on $\Z^m\cap S$.
\end{restatable}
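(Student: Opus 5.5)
The plan is to use the standard characterization of convex extensibility. Write $\Lambda=\Z^m\cap S$ and $h(z)=f(k z)$ for $z\in\Lambda$. Then $h$ is convex extensible on $\Lambda$ if and only if for every $y\in\Lambda$ and every convex combination $y=\sum_i\lambda_i z_i$ with $z_i\in\Lambda$ we have $h(y)\le\sum_i\lambda_i h(z_i)$. By Carathéodory's theorem in the $d$-dimensional space $S$, we may assume the $z_i$ are the vertices $z_0,\dots,z_j$ ($j\le d$) of a simplex $\Delta$ with $y$ in its relative interior. Moreover, taking $(y,\mathrm{env}(y))$ on a lowest face of the lower convex hull of the lifted points $\{(z,h(z))\}$, it suffices to treat the case where the $z_i$ are affinely independent. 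So the goal becomes the following statement about the dilated simplex $k\Delta=\conv(kz_0,\dots,kz_j)$:
\[
  f(ky)\le\sum_i\lambda_i f(kz_i).
\]

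The key feature to exploit is that integer midpoint convexity is a statement about $f$ on all of $\Z^m$. The intermediate points used in the argument therefore need not lie in $k\Lambda$; they only need to be integral. Two consequences of the hypothesis will be used.

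1. For every $x,p\in\Z^m$, the one-dimensional sequence $t\mapsto f(x+tp)$ is discretely convex. This follows by applying $f(x)\le\tfrac12 f(x-p)+\tfrac12 f(x+p)$ at every integer point of the line.
2. Hence, if $x$ and $x+Np$ are integral, then $f$ at every integer point $x+tp$ with $0\le t\le N$ is bounded by the linear interpolation of its values at the two ends.

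I would then prove, by induction on $j$, the stronger claim that for every integral point $x$ of $k\Delta$ lying in a suitable sublattice, $f(x)$ is at most the affine interpolant $\ell$ of the values $f(kz_i)$ at $x$. The base case $j=1$ is exactly consequence 1, applied with $p$ the primitive lattice vector along $z_1-z_0$.

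For the inductive step, take an integral $x\in k\Delta$ and shoot a ray from the vertex $kz_0$ through $x$. It exits $k\Delta$ at a point $w$ on the opposite facet $F$, and by induction $f\le\ell$ at integral points of $F$. Applying consequence 2 along the line through $kz_0$ and $w$ then gives $f(x)\le\ell(x)$, provided $w$ is integral.

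The main obstacle is precisely that $w$ need not be integral, and this is where the hypothesis $k\ge 2d$ must enter. Write $x-kz_0$ as an integer multiple of a primitive vector $q$. The exit point is $w=kz_0+\mu q$ with $\mu$ rational, and its denominator is controlled by the barycentric coordinates of $y$ in $\Delta$.

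My first attempt at a fix is to not insist on a single segment. Instead, I would sandwich $x$ between two integral points $x-a q$ and $x+bq$ on the same line, with both points lying in $k\Delta$, or at least close enough to $F$ that the inductive bound still applies. The inductive hypothesis would be strengthened to give $f(u)\le\ell(u)$ at integral points in a neighbourhood of each face, not just on it. Since $\ell$ is affine, a slight overshoot past a facet costs nothing as long as the overshoot point still has a convex-combination certificate inside the dilated simplex.

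The factor $2$ per dimension is what I expect to be needed: each level of the induction loses one "half-step" on each side of the midpoint inequality. This suggests carrying as invariant that points of $k'\Delta'\cap\Z^m$ for a $j'$-dimensional face $\Delta'$ with $k'\ge 2j'$ are controlled.

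If this direct approach gets messy, I would fall back on a decomposition lemma. Such a lemma would state that any lattice point of $2j\Delta$ that is not a vertex can be written as the midpoint of two integral points of $2j\Delta$, each lying in a strictly lower-dimensional face or strictly closer to the vertices in a suitable potential. The claim would then follow by a well-founded induction, using only the midpoint inequality.
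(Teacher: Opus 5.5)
\textbf{There is a genuine gap: the inductive step is never carried out, and neither suggested repair works.} Your reduction to affinely independent $z_0,\dots,z_j$ ($j\le d$) is fine, and so are your one-dimensional facts. The problems are as follows.
\begin{itemize}
\item \emph{Points near the facet $F$.} Integral points of $k\Delta$ close to $F$ but not on it lie in no lower-dimensional face. So induction on $j$ cannot supply the strengthened ``neighbourhood'' hypothesis for them.
\item \emph{Overshoot past $F$.} This is false in general. Let $a^\top x\le\beta$ define $F$ and set $f(x)=M\max\{0,a^\top x-\beta\}$. This $f$ is integer midpoint convex and vanishes on $k\Delta$, so $\ell\equiv0$. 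Yet it is as large as you like at every integral point beyond $F$. No point outside $k\Delta$ has a convex-combination certificate from inside.
\item \emph{The fallback.} Its lemma is unproved. Its extra clause (both points lower-dimensional or closer in some potential, then a well-founded induction ``using only the midpoint inequality'') asks for an order that does not exist in general. A decomposition of a point on an edge of $k\Delta$ must stay on that edge. Starting from the two endpoints, midpoint steps only reach positions divisible by the odd part of the edge's lattice length. So on an edge of $6\Delta$ (your case $j=3$) the lattice point next to a vertex is never reached. The same holds on $3\cdot[0,1]$, which is relevant because you need every $k\ge2d$, not only $k=2j$. There, in $\{0,1,2,3\}$, the only decomposition of $1$ uses $2$ and vice versa. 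Your consequence 2 rescues dimension one only because it is a global argument, not a midpoint induction.
\end{itemize}

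\textbf{The missing idea is a maximum principle in place of induction.} For it the bare decomposition, with no potential, suffices.
\begin{itemize}
\item Let $h$ be affine with $h(kz_i)=f(kz_i)$, and put $g=f-h$. Then $g$ is still integer midpoint convex and vanishes at the vertices of $k\Delta$.
\item Among the maximizers of $g$ on the finite set $k\Delta\cap\Z^m$, pick one, $x^*$, that is a vertex of their convex hull.
\item Suppose $x^*$ is not a vertex of $k\Delta$. The decomposition gives $p\ne0$ with $x^*\pm p\in k\Delta\cap\Z^m$. Then $g(x^*)\le\tfrac12g(x^*-p)+\tfrac12g(x^*+p)$ together with maximality makes $x^*\pm p$ maximizers too, contradicting extremality.
\item Hence $\max g=0$, and $f(ky)\le h(ky)=\sum_i\lambda_if(kz_i)$.
\end{itemize}

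\textbf{The decomposition itself comes from Carath\'eodory.} Write $x=\sum_i\mu_iv_i$ over at most $d+1$ vertices $v_i$ of $\Delta$, with $\mu_i\ge0$ and $\sum_i\mu_i=k$.
\begin{itemize}
\item If two weights are at least $1$, take $p=v_{i_1}-v_{i_2}$.
\item Otherwise a single weight satisfies $\mu_{i^*}\ge k-d$. Take $p=x-kv_{i^*}$, which is nonzero since $x$ is not a vertex. Then $x-p=kv_{i^*}$, and $x+p$ has weights $2\mu_i$ and $2\mu_{i^*}-k\ge k-2d\ge0$.
\end{itemize}
This reflection step, not a ``half-step lost per level'', is where $k\ge2d$ enters.
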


Here, a function $f\colon D\to\R\cup\{\infty\}$ is convex extensible on a set $S\subseteq D\subseteq\R^n$ if there exists a convex function $\hat f\colon\R^n\to\R\cup\{\infty\}$ so that $\hat f(x)=f(x)$ for all $x\in S$.

The proof of \cref{lemma:2d-dilation} builds upon the fact that a non-vertex integer point in a sufficiently dilated polytope can be written as the sum of two distinct integer points in the polytope, which follows from arguments by Powers and Reznick~\cite{POWERS2021106608}.

The theory developed by Ligthart~\cite{ligthart2026valuefunctionsseparableconvex} shows that value functions of block-structured integer programs become integer midpoint convex when inspected on a sufficiently dilated lattice. Tellingly, Ligthart~\cite{ligthart2026valuefunctionsseparableconvex} also mentions that in order to provide an FPT algorithm to solve 4-block IPs, it suffices to extend this 1-dimensional midpoint convexity to convex extensibility on intersections of fixed dimensional subspaces.
This is precisely the gap that \cref{lemma:2d-dilation} closes. Therefore, we can use the algorithmic framework from~\cite{ligthart2026valuefunctionsseparableconvex}, which was inspired by~\cite{DBLP:journals/theoretics/CslovjecsekKLPP25}, to settle the complexity class of 4-block IPs:

\begin{restatable}{theorem}{thmfourblockfpt}
	\label{thm:4-block-fpt}
	Let $A$ be a 4-block matrix with $B_i\in\{-\Delta,-\Delta+1,\dots,\Delta\}^{r\times t}$ and $D_i\in\{-\Delta,-\Delta+1,\dots,\Delta\}^{s\times t}$ for $i\in[n]$. Let $l,u\in\Z^{p+tn}$ and let $f\colon\R^{p+tn}\to\R$ be separable convex and accessible through a comparison oracle on $\Z^{p+tn}$. Then an optimal solution to the 4-block IP
	\[
	\min\{f(x)\ \vert\ Ax=b,\,l\le x\le u,\,x\in\Z^{p+tn}\}
	\]
	can be found in time
	\[
	2^{\O((r+s)\Delta)^{r(r+s)}\cdot p+p^2\log p}\cdot tn\log^{\O(1)}(tn)\cdot\log^{\O(1)}\|u-l\|_\infty.
	\]
\end{restatable}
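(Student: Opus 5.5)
The plan is to reduce the 4-block IP to optimizing a convex extensible function over the $p$ global variables, in the spirit of the framework of~\cite{ligthart2026valuefunctionsseparableconvex}. Fix the global part $y\in\Z^p$ and write the remaining variables as $x=(x_1,\dots,x_n)$. The residual problem is then an $n$-fold IP with right-hand side $(b_0-A_0y,\,b_1-C_1y,\dots,b_n-C_ny)$. Let $g(y)$ be $f_0(y)$ plus the optimal value of this $n$-fold IP, or $\infty$ if it is infeasible. Then the 4-block IP is exactly $\min_{y} g(y)$ subject to $l_0\le y\le u_0$.

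By~\cite{ligthart2026valuefunctionsseparableconvex}, the value function of an $n$-fold IP with separable convex objective is integer midpoint convex with respect to its right-hand side, once that right-hand side is restricted to a suitably dilated lattice. Concretely, there is a modulus $M\le (\O((r+s)\Delta))^{r(r+s)}$ such that, on each residue class $y\equiv y_0 \pmod M$, the function $z\mapsto g(y_0+Mz)$ is integer midpoint convex. This uses that the right-hand side depends linearly on $y$, so composing with the affine map $y\mapsto Ay$ preserves midpoint convexity.

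Midpoint convexity alone does not suffice to minimize over $p$ dimensions; this is where \cref{lemma:2d-dilation} enters. I would apply it with $S=\R^p$, $d=p$ and $k=2p$, working directly in the $p$-dimensional $y$-space (the value function on the image subspace, pulled back to $\R^p$). The lemma gives that $z\mapsto g(y_0+2pMz)$ is convex extensible on $\Z^p$. The step I expect to be the main obstacle is checking that the hypotheses of \cref{lemma:2d-dilation} really apply to the pulled-back function. It must be defined on all of $\Z^p$, with $\infty$ values allowed off the feasible region, and the lemma has to tolerate this; it may be necessary to pass through the subspace version with the image of $y\mapsto (A_0;C_1;\dots;C_n)y$. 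The residues $y_0$ are then enumerated: there are $(2pM)^p=2^{\O((r+s)\Delta)^{r(r+s)}\cdot p+p\log p}$ classes.

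Within each residue class, the restricted function is a convex extensible function on a box of $\Z^p$, accessible through an evaluation oracle; each evaluation solves one $n$-fold IP in near-linear time $2^{\O(\cdot)}tn\log^{\O(1)}(tn)\log^{\O(1)}\|u-l\|_\infty$ with the known $n$-fold algorithms. To minimize it, I would use a convex minimization method over integer points in fixed dimension, for example a cutting-plane or ellipsoid approach with rounding via subgradient surrogates obtained from the convex extension, or proximity plus scaling. Such a method needs $2^{\O(p\log p)}\log^{\O(1)}\|u-l\|_\infty$ oracle calls, which accounts for the $p^2\log p$ term. Taking the best solution over all residue classes and recovering $x$ from the corresponding $n$-fold solve gives the theorem. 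Since $f$ is given only through a comparison oracle, all steps would use comparisons only, which the $n$-fold algorithms and the convex minimization via comparisons of function values allow.
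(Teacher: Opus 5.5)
Your overall architecture matches the paper's: residue classes modulo a dilation, midpoint convexity of the $n$-fold value function from the IDP dilation, \cref{lemma:2d-dilation} with $k=2p$ to upgrade to convex extensibility, and then integer convex minimization over $y$. However, the step you flag as the main obstacle is a genuine gap, and your hope that \cref{lemma:2d-dilation} ``tolerates'' $\infty$ values is false. The lemma is stated for $f\colon\Z^m\to\R$, and its proof needs $g^*<\infty$. If some lattice point of $kP$ has value $\infty$, midpoint convexity at $x^*$ only forces one of $x^*\pm p$ to have value $\infty$, so there is no contradiction with $x^*$ being a vertex of $X^*$. Concretely, take $f\colon\Z\to\R\cup\{\infty\}$ with $f=0$ on $\{0,3,6\}$ and $f=\infty$ elsewhere. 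It is integer midpoint convex: the only pair of distinct domain points with an integer midpoint is $\{0,6\}$, and $3$ is in the domain. Yet $z\mapsto f(2z)$ is finite at $z=0,3$ and infinite at $z=1,2$, so it is not convex extensible. Your $g(y)$ is $\infty$ whenever the $n$-fold subproblem with bounds $\hat l\le x\le\hat u$ is infeasible, so the lemma cannot be applied to it. Passing to the image subspace of $(A_0;C_1;\dots;C_n)$ does not change this.

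The missing idea is to make the value function finite on all of $\Z^m$ before invoking the lemma. The paper does this with the slack matrix $[A,I]$ and objective $f'(x_1)+\alpha\|x_2\|_1+\xi(x_1)$, where $\xi$ is the indicator of the bounds.
\begin{itemize}
\item Finiteness: $(l,b-Al)$ is always feasible, so $h<\infty$; the bounds and the nonnegative penalty give $h>-\infty$.
\item Midpoint convexity: since $[A,I]\in\nfold(r,s,\Delta)$, \cref{lemma:n-fold-scaled-idp} and \cref{proposition:idp-iff-convex-extensibility-along-line} still give integer midpoint convexity on translates $r+M\Z^m$.
\item Convex extensibility: \cref{lemma:2d-dilation} then gives convex extensibility on $\Z^m\cap S$ with $M'=2pM$.
\end{itemize}
Lemma 18 of Ligthart (\cref{lemma:fixed-phase-value-function-reformulation}) is formulated with exactly this slack value function. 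The paper observes that its proof only needs convex extensibility for such everywhere-finite instances (\cref{lemma:strengthened-fixed-phase-value-function-reformulation}). That lemma also handles the minimization over $y$ with only a comparison oracle, which in your proposal remains an unspecified cutting-plane procedure.

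There are two smaller slips. First, the dilation is $M=2^{\O((r+s)\Delta)^{r(r+s)}}$, not $(\O((r+s)\Delta))^{r(r+s)}$; your class count already uses the former. Second, $2^{\O(p\log p)}$ oracle calls would not account for the $2^{\O(p^2\log p)}$ factor, which in the paper comes from Lemma 18.
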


The running time is measured in terms of the number of arithmetic operations and comparison oracle queries as is standard.

Note that it is known that the parametric dependence of the running time in \cref{thm:4-block-fpt} must be at least doubly exponential, since this is already necessary for 2-stage stochastic integer linear programs with a single global variable~\cite{DBLP:journals/mp/JansenKL23}.

Moreover, we show that our algorithm can be extended, and that our results are essentially pushing the tractability results as far as possible. First, note that the objective function need not have the linear form $c^\top x$. Instead we consider separable convex objectives and it is known that it cannot be pushed beyond this class to, e.g., separable concave or general convex functions~\cite{monster}.
Second, the constraint matrix obtained by removing the global columns need not be $n$-fold, but can be generally tree-fold, or, equivalently, have bounded dual treedepth.
The natural generalization of this direction would be that the matrix is, after deleting few global rows and columns, a diagonal matrix composed of $4$-block blocks; however, this case is already NP-hard~\cite{DBLP:conf/ipco/EibenGKOPW19}.
Third, the global columns may contain large coefficients, as first studied by Cslovjecsek et al.~\cite{DBLP:journals/theoretics/CslovjecsekKLPP25}, while allowing large coefficients in the diagonal blocks leads to NP-hardness.

\subsection*{Related work}
The first to study 4-block IPs were Hemmecke, Köppe, and Weismantel, who proved in~\cite{DBLP:conf/ipco/HemmeckeKW10} that uniform 4-block IPs admit polynomial time algorithms when the block dimensions are fixed (i.e., the problem is in XP). Here, a 4-block matrix is uniform if $B_1=B_2=\dots=B_n$, $C_1=C_2=\dots=C_n$, and $D_1=D_2=\dots=D_n$. Faster algorithms with slice-wise polynomial running times have been given by Chen et al.~\cite{DBLP:conf/esa/0009K0S20}, Oertel, Paat, and Weismantel~\cite{DBLP:journals/mp/OertelPW24}, and Lassota and Ligthart~\cite{DBLP:conf/ipco/LassotaL26}, the latter two also allowing for non-uniform matrices.
More special cases on the entries and structure of the blocks have been studied, see, e.g.,~\cite{DBLP:journals/disopt/ChenCZ22,DBLP:conf/esa/0009K0S20}.

A key step towards the resolution of the 4-block problem has been the paper of Cslovjecsek et al.~\cite{DBLP:journals/theoretics/CslovjecsekKLPP25}, which studied whether large coefficients can be admitted into the constraint matrices of 2-stage and $n$-fold matrices.
The motivation for this work was precisely the 4-block problem: the authors of~\cite{DBLP:journals/theoretics/CslovjecsekKLPP25} point out that an FPT algorithm for 4-block IP would imply the solvability of the large-coefficient classes.
The key insight of~\cite{DBLP:journals/theoretics/CslovjecsekKLPP25} was in focusing on the residue classes of the right hand side vector.
This line was further developed by Eisenbrand and Rothvoss~\cite{DBLP:conf/soda/EisenbrandR26}, Lassota and Ligthart~\cite{DBLP:conf/ipco/LassotaL26}, and Ligthart~\cite{ligthart2026valuefunctionsseparableconvex} and culminates here.

In terms of lower bounds, Eiben et al.~\cite{DBLP:conf/ipco/EibenGKOPW19} studied a generalization of 4-block IPs, namely IPs with small incidence treedepth.
They show that this class is NP-hard already for matrices with entries in $\{-1,0,1\}$ and incidence treedepth at most~5.
The incidence treedepth of 4-block IPs is bounded by their block dimensions; their solvability (despite the hardness result of Eiben et al.~\cite{DBLP:conf/ipco/EibenGKOPW19}) steps from the fact that their \emph{incidence topological height}~\cite{monster} is 2, while the hardness construction~\cite{DBLP:conf/ipco/EibenGKOPW19} requires topological height 3.
For 4-block IP specifically, Chen et al.~\cite{DBLP:conf/esa/0009K0S20} proved a $\Omega(n^r)$ lower bound on the $\ell_\infty$-norm of Graver basis elements, which are a fundamental inclusion-wise minimal kernel elements, used heavily in all algorithms for IPs with bounded treedepth, such as $n$-folds and 2-stage stochastic IPs.
Hence, these frameworks could not be applied to 4-block IPs. Jansen, Klein and Lassota~\cite{DBLP:journals/mp/JansenKL23} proved a double-exponential lower bound $2^{2^{\Omega(s+t)}}|I|^{\O(1)}$ for the 2-stage stochastic IPs under the ETH, ruling out a more efficient runtime for 4-block IPs. Finally, Chen et al.~\cite{DBLP:journals/disopt/ChenCZ22} showed that even $n$-fold IP, and hence 4-block $n$-fold IP, is NP-hard when the block dimensions are constant and the largest coefficient is part of the input, ruling out a running time polynomial in $\log\Delta$ under the assumption that $P \neq NP$.

During the preparation of this manuscript, independently and concurrently, Jansen, Ohnesorge, and Wambsganz published an FPT time algorithm for 4-block integer linear programming on arXiv using qualitatively the same framework~\cite{jansen2026fixedparameteralgorithm4blockinteger}. They declare in the paper~\cite{jansen2026fixedparameteralgorithm4blockinteger} that the final resolution came from the use of GPT-6-Astra.
In our case, GPT-6-Astra was used to obtain Lemmata~\ref{lemma:2d-dilation} and~\ref{lemma:non-vertices-are-decomposable}; the rest of the paper was developed independently of AI contributions. Further, we provide multiple extensions to this result as discussed above, including answering Jansen et al.'s~\ref{lemma:non-vertices-are-decomposable} open question of whether the techniques can be extended to separable convex objective functions.

%% file: sections/convex-extensibility.tex
\section{Convex extensibility of integer midpoint convex functions}
\label{sec:convex-extensibility}
This section is dedicated to proving the main structural result, \cref{lemma:2d-dilation}. \Cref{lemma:2d-dilation} is the key insight that allows us to apply the framework from~\cite{ligthart2026valuefunctionsseparableconvex} to obtain the desired FPT time algorithm proving~\cref{thm:4-block-fpt} in~\cref{sec:algorithm}

The proof of \cref{lemma:2d-dilation} relies on a decomposition property of non-vertex integral points in dilated integral polytopes that was noted in~\cite{POWERS2021106608}. We spell out a variant of this with a short self-contained proof.

\begin{lemma}[Variant of Theorem 3.1 from~\cite{POWERS2021106608}]
    \label{lemma:non-vertices-are-decomposable}
    Let $P\subseteq\R^n$ be an integral polytope and $k\in\Z_{\ge0}$ be so that $P$ has affine dimension $d\le\tfrac12k$. If $x\in kP\cap\Z^n$ is not a vertex of $kP$, then there exists a $p\in\Z^n\setminus\{0\}$ so that $x-p,x+p\in kP\cap\Z^n$.
\end{lemma}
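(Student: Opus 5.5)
The plan is a Carathéodory representation of $x$ in terms of vertices of $kP$, followed by two cases depending on how the coefficients are distributed. First, $k=0$ is trivial: then $kP=\{0\}$ and its only point is a vertex. For $k\ge1$, $x/k\in P$. By Carathéodory's theorem, write $x/k=\sum_{i=0}^{e}\lambda_i v_i$ with affinely independent vertices $v_0,\dots,v_e\in\Z^n$ of $P$, all $\lambda_i>0$, $\sum_i\lambda_i=1$, and $e\le d$.

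If $e=0$, then $x=kv_0$ is a vertex of $kP$, which is excluded. So $e\ge1$. Set $\mu_i=k\lambda_i>0$, so that $x=\sum_i\mu_i v_i$ with $\sum_i\mu_i=k$. Any point $\sum_i\nu_i v_i$ with $\nu_i\ge0$ and $\sum_i\nu_i=k$ lies in $kP$. The task is therefore to find an integral $p\neq0$ such that $x\pm p$ keep nonnegative coefficients with sum $k$.

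\emph{Case 1: two indices $i\neq j$ have $\mu_i\ge1$ and $\mu_j\ge1$.} Take $p=v_i-v_j$. It is integral, and it is nonzero because the vertices are distinct. The points $x\pm p$ change the coefficients $(\mu_i,\mu_j)$ to $(\mu_i\pm1,\mu_j\mp1)$, which stay nonnegative and keep the sum $k$. So $x\pm p\in kP\cap\Z^n$.

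\emph{Case 2: at most one coefficient is $\ge1$.} Relabel so that $\mu_j<1$ for all $j\ge1$, and put $c=\sum_{j\ge1}\mu_j$. Then $0<c<e\le d$, so $\mu_0=k-c>k-d\ge d>c$; this is where $d\le\tfrac12k$ is used. Let $c'=\lceil c\rceil\le e$ and take
\[
p=\sum_{j\ge1}\mu_j(v_j-v_0)=x-kv_0 .
\]
The identity on the right shows that $p$ is integral, because $x$ and $v_0$ are. It is nonzero by affine independence, since $\mu_j>0$ and $e\ge1$. (One could also see integrality via $w=x-(k-c')v_0$, but the identity above is already enough.)

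It remains to check that $x\pm p\in kP$:
\[
x-p=kv_0\in kP,\qquad x+p=(\mu_0-c)v_0+\sum_{j\ge1}2\mu_j v_j .
\]
In the second expression the coefficients are nonnegative because $\mu_0\ge c$, and they sum to $\mu_0+c=k$. Hence $x\pm p\in kP\cap\Z^n$.

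The main obstacle is choosing $p$ in Case 2 so that it is integral without any assumption on fractional parts. Taking $p=x-kv_0$ makes integrality automatic. Nonnegativity of the coefficients of $x+p$ then reduces to $\mu_0\ge c$, and this is exactly what the dilation bound $k\ge2d$ provides.
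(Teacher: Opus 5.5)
Your proof is correct and follows essentially the same route as the paper: a Carath\'eodory representation of $x$ by vertices of $P$ with coefficients summing to $k$, then $p=v_i-v_j$ when two coefficients are at least $1$, and otherwise $p=x-kv_0$, where $k\ge 2d$ makes the coefficient $2\mu_0-k$ of $v_0$ in $x+p$ nonnegative. Your write-up is also slightly more careful than the paper's on the degenerate points: the case $k=0$, the possibility that no coefficient reaches $1$ (your bound $\mu_0>d$ rules it out), and the justification that $p\neq 0$.
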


\begin{proof}
    Caratheodory's theorem shows that $x$ can be written as $x=\sum_{i\in[\ell]}\lambda_iv_i$ for $\ell\le d+1$ vertices $v_i$ of $P$ with multipliers $\lambda_i\ge0$ satisfying $\sum_{i\in[\ell]}\lambda_i=k$. If there exist two distinct indices $i_1,i_2$ so that $\lambda_{i_1},\lambda_{i_2}\ge1$. Then $p=v_{i_1}-v_{i_2}\in\Z^n$ guarantees that
    \[
        x-p=\sum_{i\in[\ell]\setminus\{i_1,i_2\}}\lambda_iv_i+(\lambda_{i_1}-1)v_{i_1}+(\lambda_{i_2}+1)v_{i_2}\in kP
    \]
    since all multipliers are nonnegative. Similarly, it holds that $x+p\in kP$. Otherwise, there is only one index $i^*$ so that $\lambda_{i^*}\ge1$. In this case, $\sum_{i\in[\ell]\setminus\{i^*\}}\lambda_i\le\ell-1$, which shows that $\lambda_{i^*}\ge k-(\ell-1)$. Set $p=x-kv_{i^*}$, which is integral and nonzero because $x$ is not a vertex. We verify that $x-p=kv_{i^*}\in kP$
    and that
    \[
        x+p=2x-kv_{i^*}=\sum_{i\in[\ell]\setminus\{i^*\}}2\lambda_iv_i+(2\lambda_{i^*}-k)v_{i^*}\in kP
    \]
    since $2\lambda_{i^*}-k\ge2k-2\ell+2-k\ge k+2-2(d+1)=k-2d\ge0$.
\end{proof}

We are now ready to show \cref{lemma:2d-dilation}.

\lemmatwoddilation*

\begin{proof}
    It suffices to show that
    \[
        f(k\cdot z)\le\sum_{i\in[\ell]}\lambda_if(k\cdot z_i)
    \]
    for $z_i\in\Z^m\cap S$ and $\lambda_i\ge0$, $\sum_{i\in[\ell]}\lambda_i=1$, and $z=\sum_{i\in[\ell]}\lambda_iz_i\in\Z^m\cap S$. Caratheodory's theorem shows that we may assume that all $z_i$ are affinely independent, which implies that $\ell\le d+1$. Additionally, the affine independence guarantees that there exists an affine function $h\colon\R^n\to\R$ so that $h(k\cdot z_i)=f(k\cdot z_i)$ and $h(k\cdot z)=\sum_{i\in[\ell]}\lambda_if(k\cdot z_i)$. Consider $g\colon\Z^n\to\R$ defined by $g(x)=f(x)-h(x)$, which is integer midpoint convex. Define the simplex $P=\conv\{z_1,\dots,z_\ell\}$, which is of affine dimension at most $d$. Let $x^*$ be a vertex of the set $X^*=\conv\{x\in kP\cap\Z^m:g(x)=g^*\}$ where $g^*=\max_{x\in kP\cap\Z^m}g(x)$. Suppose for a contradiction that $x^*$ is not a vertex of $kP$, then \cref{lemma:non-vertices-are-decomposable} shows that there exists a $p\in\Z^m\setminus\{0\}$ so that $x-p,x+p\in kP\cap\Z^m$. Optimality of $x^*$ shows that $g(x-p),g(x+p)\le g(x^*)$. On the other hand, integer midpoint convexity shows that
    \[
        g(x^*)\le\tfrac12g(x-p)+\tfrac12g(x+p)\le\tfrac12g(x^*)+\tfrac12g(x^*),
    \]
    showing that $g(x^*)=g(x-p)=g(x+p)$, i.e., $x-p,x+p\in X^*$. This contradicts that $x^*$ was a vertex of $X^*$. Hence, $x^*=k\cdot z_{i^*}$ for some $i^*\in[\ell]$, which shows that
    \[
        f(k\cdot z)-h(k\cdot z)= g(k\cdot z)\le g(x^*)=g(k\cdot z_{i^*})=f(k\cdot z_{i^*})-f(k\cdot z_{i^*})=0,
    \]
    which shows the desired inequality.
\end{proof}

%% file: sections/algorithm.tex
\section{Algorithm}
\label{sec:algorithm}
We first spell out the necessary definitions and results from~\cite{ligthart2026valuefunctionsseparableconvex} to be able to apply~\cref{lemma:2d-dilation} to derive the desired FPT time algorithm. 

In~\cite{ligthart2026valuefunctionsseparableconvex}, Ligthart connects the \emph{integer decomposition property} of polyhedra with convex extensibility of value functions of integer programs. The value function of an integer program is the function that maps the vector $b$ of right-hand sides to the optimal objective value of an optimal solution. We consider value functions defined on $\Z^m$ of the form
\[
    b\mapsto\min\{f(x)\ \vert\ Ax=b,\ x\in\Z^n\},
\]
for constraint matrices $A\in\Z^{m\times n}$ and separable convex objective functions $f\colon\R^n\to\R\cup\{\infty\}$. Note that infinite values of $f$ are used to model variable lower and upper bounds on $x$. The value function itself is $\infty$ if it admits no solution with finite objective value and $-\infty$ when the problem is unbounded.

A polyhedron $P\subseteq\R^n$ has the integer decomposition property (IDP) if for all $k\in\Z_{\ge0}$ and $x\in kP\cap\Z^n$ there exist $x_1,\dots,x_k\in P\cap\Z^n$ so that $x=x_1+\dots+x_k$. In~\cite{ligthart2026valuefunctionsseparableconvex}, it is said that a class of constraint matrices $\mathcal A$ has the IDP after an $M$-dilation if the dilated polyhedron $M\cdot\{x\in\R_{\ge0}^n:Ax=b\}$ has the IDP for all integral vectors $b\in\Z^m$ and matrices $A\in\Z^{m\times n}$ in $\mathcal A$. For the purpose of proving \cref{thm:4-block-fpt}, we focus on the class $\nfold(r,s,\Delta)$, which consists of all matrices of the form $A_{\text{n-fold}}$ in~(\ref{eq:nfoldmatrix}) with $B_i\in\{-\Delta,-\Delta+1,\dots,\Delta\}^{r\times t_i}$ and $D_i\in\{-\Delta,-\Delta+1,\dots,\Delta\}^{s\times t_i}$ for $i\in[n]$.

In this setting, Ligthart~\cite{ligthart2026valuefunctionsseparableconvex} shows that the dilation $M$ that is needed to establish the IDP is connected to the convex extensibility of IP value functions on intersections with dilated lattice translates and lines:

\begin{proposition}[Proposition 7 in~\cite{ligthart2026valuefunctionsseparableconvex}]
    \label{proposition:idp-iff-convex-extensibility-along-line}
    Let $\mathcal A$ be a class of constraint matrices that is closed under inverting columns. Then $\mathcal A$ has the IDP after an $M$-dilation if and only if the value function $h\colon\Z^m\to\R\cup\{-\infty,\infty\}$ given by $h(z)=\min\{f(x)\ \vert\ Ax=r+Mz,\,x\in\Z^n\}$ is convex extensible on $\Z^m\cap L$ for any separable convex $f\colon\R^n\to\R\cup\{\infty\}$, offset vector $r\in\Z^m$ and line $L\subseteq\R^m$.
\end{proposition}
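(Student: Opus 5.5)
The plan is to prove the two directions separately. The ``if'' direction takes a single cleverly chosen objective $f$. The ``only if'' direction is a conformal-decomposition argument in the spirit of the separable convex superadditivity used throughout the $n$-fold literature.

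\emph{($\Leftarrow$)} Fix $A\in\mathcal A$, $b\in\Z^m$, $k\in\Z_{\ge0}$ and $x\in kM\{y\ge0:Ay=b\}\cap\Z^n$, so that $x\ge0$ and $Ax=kMb$. I will show that $x$ splits off one integer point $x_1\ge0$ with $Ax_1=Mb$; then $x-x_1$ lies in $(k-1)M\{y\ge0:Ay=b\}\cap\Z^n$, and induction on $k$ finishes this direction. To find $x_1$, take $f$ to be the indicator of the box $\{y:0\le y\le x\}$, which is $0$ on the box and $\infty$ outside. This $f$ is separable convex. Take offset $r=0$ and let $L$ be the line $\R b$. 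The value function $h(z)=\min\{f(y)\mid Ay=Mz,\ y\in\Z^n\}$ only takes values in $\{0,\infty\}$. We have $h(0)=0$ via $y=0$, and $h(kb)=0$ via $y=x$. A convex extension $\hat h$ satisfies $\hat h(b)\le\frac{k-1}{k}\hat h(0)+\frac1k\hat h(kb)=0$, so $h(b)=\hat h(b)<\infty$. This yields an integer $x_1$ with $0\le x_1\le x$ and $Ax_1=Mb$, as required. The degenerate case $b=0$ or $k\le1$ is trivial.

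\emph{($\Rightarrow$)} Every lattice point of a line $L$ (if there are any) has the form $z_0+jd$ with $j\in\Z$, where $d$ is primitive. Absorbing $Mz_0$ into the offset $r$, it suffices to show that the univariate function $\phi(j)=h(z_0+jd)$ is convex extensible on $\Z$. For functions on $\Z$ this reduces to two facts: the effective domain is an interval, and $2\phi(j)\le\phi(j-1)+\phi(j+1)$ holds. I will prove the stronger exchange inequality
\[
\phi(j)+\phi(a+c-j)\le\phi(a)+\phi(c)\qquad\text{for all integers }a\le j\le c.
\]
Take feasible $y_a,y_c$ for $a,c$; these are optimal, or $\varepsilon$-optimal if the minimum is not attained, or of arbitrarily small value if $\phi=-\infty$. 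Set $w=y_c-y_a$, so $Aw=(c-a)Md$. Since $\mathcal A$ is closed under inverting columns, let $A'$ be $A$ with the columns where $w<0$ negated and $w'=|w|$. Then $w'\ge0$ and $A'w'=(c-a)Md$, i.e.\ $w'\in(c-a)M\{y\ge0:A'y=d\}\cap\Z^n$. The IDP then gives $w'=w'_1+\dots+w'_{c-a}$ with each $w'_i\ge0$ integral and $A'w'_i=Md$. Undoing the sign flips gives integer vectors $w_i$ that are conformal to $w$, sum to $w$, and satisfy $Aw_i=Md$. Let $S=\sum_{i\le j-a}w_i$. Then $y_a+S$ is feasible for $j$ and $y_c-S$ is feasible for $a+c-j$. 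Since $S$ is conformal to $w=y_c-y_a$ and sandwiched between $0$ and $w$, separable convexity gives $f(y_a+S)+f(y_c-S)\le f(y_a)+f(y_c)$ coordinatewise. This is the standard one-dimensional inequality $g(\alpha+\sigma)+g(\beta-\sigma)\le g(\alpha)+g(\beta)$ for convex $g$ when $\sigma$ lies between $0$ and $\beta-\alpha$. Taking infima proves the exchange inequality. The inequality also shows that finiteness at $a$ and $c$ forces finiteness at every $j$ in between, so the domain is an interval.

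The main obstacle I expect is bookkeeping rather than any new idea. The value $-\infty$ has to be handled so that ``convex extensible'' still makes sense. I would show that $\phi=-\infty$ at one point of the domain propagates to the whole domain interval, using the same exchange inequality. I also need to pass from discrete midpoint convexity of $\phi$ to an honest convex extension on $\R$, by piecewise linear interpolation on the domain interval and $+\infty$ outside it. Finally, I must check that the column-inversion step is legitimate: the decomposition is computed for $A'$, and the IDP is assumed for $A'$ rather than $A$, which is exactly what closure of $\mathcal A$ under inverting columns provides.
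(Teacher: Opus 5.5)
Your argument is correct. There is no in-paper proof to compare it with: the paper imports this proposition as Proposition~7 of Ligthart's work and gives no proof, so your proposal stands as a self-contained proof.

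\textbf{($\Leftarrow$).} The box indicator of $[0,x]$, evaluated at the three collinear lattice points $0,b,kb$ on $\R b$, peels off one summand $x_1$ with $0\le x_1\le x$ and $Ax_1=Mb$. Induction on $k$ then finishes.

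\textbf{($\Rightarrow$).} Flipping the columns where $y_c-y_a<0$ is exactly where closure under inverting columns is needed. The IDP for $A'$ then gives a conformal decomposition of $y_c-y_a$ into pieces $w_i$ with $Aw_i=Md$. The coordinatewise inequality $g(\alpha+\sigma)+g(\beta-\sigma)\le g(\alpha)+g(\beta)$ yields the exchange inequality.

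\textbf{What your route makes visible.} First, the ($\Leftarrow$) direction needs the hypothesis only for box-indicator objectives, offset $r=0$, and lines through the origin. Second, the exchange step works for any integral direction $d$, primitive or not. Taking $a=j-1$ and $c=j+1$ therefore gives integer midpoint convexity of $h$ on all of $\Z^m$, using only the $k=2$ case of the decomposition. That midpoint convexity is the only consequence of the proposition this paper uses: it is what feeds \cref{lemma:2d-dilation} in the proof of \cref{thm:4-block-fpt}.

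\textbf{A small correction to your bookkeeping plan.} The exchange inequality cannot force $\phi=-\infty$ at an endpoint of the domain interval, because endpoints never appear on its left-hand side. So ``propagating $-\infty$ to the whole domain using the same exchange inequality'' does not work as stated. It is also unnecessary. An extended-valued convex function may equal $-\infty$ on the relative interior of its domain and be finite at the boundary. What you actually need is the following, which your construction with $a=j-1$, $c=j+1$ does give. Take a sequence of feasible points with values tending to $-\infty$ at one neighbour and a fixed feasible point at the other. Then $-\infty$ at one domain point propagates to every point strictly between it and any other domain point. Full propagation to the endpoints is in fact true for these value functions, but it would need a separate argument, e.g.\ via Graver bases. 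Finally, a line with exactly one lattice point has no direction $d$ in your parametrization; that case is trivial.
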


Here, $\mathcal A$ is closed under inverting columns if flipping the sign of all coefficients in a column of a matrix in $\mathcal A$ results in a matrix that is also in $\mathcal A$, which is satisfied by $\nfold(r,s,\Delta)$. Ligthart~\cite{ligthart2026valuefunctionsseparableconvex} shows that $\nfold(r,s,\Delta)$ (along with other block-structured constraint matrices) are well-behaved by showing that it has the IDP after a dilation that is bounded by a function of the parameters $r,s,\Delta$, independently of the number of blocks $n$:

\begin{lemma}[Corollary 12 in~\cite{ligthart2026valuefunctionsseparableconvex}]
    \label{lemma:n-fold-scaled-idp}
    Let $r,s,\Delta\in\Z_{\ge0}$ be given. Then there exists a positive integer $M=2^{\O((r+s)\Delta)^{r(r+s)}}$ so that $\nfold(r,s,\Delta)$ has the IDP after an $M$-dilation.
\end{lemma}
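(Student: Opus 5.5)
This is cited from Ligthart (Corollary 12 there), so the plan is to reconstruct that argument: a Graver-basis proximity/decomposition argument combined with a pigeonhole-type dilation bound. For an $n$-fold matrix $A$ and $b\in\Z^m$, let $P=\{x\ge0:Ax=b\}$. We must show that for every $k$, every $x\in kMP\cap\Z^n$ splits as a sum of $k$ integer points of $MP$. By induction on $k$, it suffices to find an integer point $y\in MP$ with $x-y\in(k-1)MP\cap\Z^n$. Equivalently, $0\le y\le x$, $Ay=Mb$ and $A(x-y)=(k-1)Mb$.

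We start from the rational point $x/k\in MP$. We want to round it to an integer $y$ in the same fiber that stays sign-compatible with $x$. Write $x/k=y_0+\epsilon$ with $y_0=\lfloor x/k\rfloor$ and $A\epsilon=Mb-Ay_0$. Then $\epsilon\in[0,1)^n$ is a fractional vector with right-hand side $\beta=Mb-Ay_0$. We decompose $\epsilon$ blockwise. On each brick, the bottom constraint $D_i\epsilon^i$ must be an integer vector. Using the Steinitz lemma, or Eisenbrand--Weismantel style bounds, each brick's fractional part is conformally a combination of at most $(s\Delta)^{\O(s)}$ integer Graver elements of $D_i$ plus an integral remainder. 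This reduces the question to the top constraint, which lives in only $r$ dimensions.

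For the top rows, the projections $B_i g$ of the brick-level Graver elements $g$ lie in a set of size $(r+s)\Delta)^{\O(r(r+s))}$. The rounding error in the global constraint is an $r$-dimensional vector, and we must correct it using multiples of these finitely many directions. The role of $M$ is to make the correction possible: if the fractional weight carried by each type of Graver direction is scaled by $M$, it becomes integral once $M$ is divisible by the lcm of the relevant determinants. Those determinants are bounded by $(\text{size of the set})^{r}$, which gives $M=2^{\O((r+s)\Delta)^{r(r+s)}}$. Concretely, for each type $\tau$ we aggregate the fractional coefficients $\mu_\tau$ across bricks. Then $M\mu_\tau$ is integral after a Carathéodory reduction of the support to $r$ types. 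Finally we distribute these integral amounts back brick by brick, greedily, without exceeding $x$.

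The main obstacle is this last distribution step. We must keep $0\le y\le x$ while realizing an integral total for each type from fractional per-brick contributions. The first thing I would try is a flow or transportation rounding per type: the per-brick amounts form a vector with integral sum, so we can round each brick's amount up or down while preserving the sum. Bricks that are rounded up borrow slack from the $\lfloor\cdot\rfloor$ part, and conformality together with $k\ge1$ guarantees that this slack exists. The base case $k=1$ is trivial, which closes the induction.
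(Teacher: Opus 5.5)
The paper gives no proof of this statement (it is imported from Ligthart), so your sketch has to stand on its own. It has a genuine gap exactly where the dilation is supposed to act.

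In your construction $M$ only multiplies $b$. The data you round, $y_0=\lfloor x/k\rfloor$ and $\epsilon$, have denominators dividing the arbitrary $k$, and the residual $Mb-Ay_0$ is an arbitrary integer vector. After a Carath\'eodory reduction the coefficients $\mu_\tau$ have denominators dividing a basis determinant, so $M\mu_\tau\in\Z$. But $M\mu_\tau$ copies of a direction correct $M$ times the residual, not the residual, so there is no integral quantity to ``distribute back''.

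Concretely, take a single brick $B_1=[2\ 3]$ ($r=1$, $s=0$), $b=1$, $M=6$, $k=3$, $x=(3,4)$. Then $y_0=(1,1)$, $\epsilon=(0,\tfrac13)$, the residual is $1=\tfrac13\cdot 3$, and $M\mu=2$. Adding $2e_2$ to $y_0$ gives $Ay=11\neq 6$. The only valid answers are $y=(3,0)$ and $y=(0,2)$. Both change the first coordinate although $\epsilon_1=0$, so no rounding conformal to $\epsilon$ can succeed. The lemma is false for $M=1$ (take $A=[2]$, $b=1$, $k=2$, $x=1$), so any proof must use divisibility of $M$ on genuinely integral objects, and your sketch never does.

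Separately, the stated exponent comes from Hadamard's bound in terms of $\|B_ig\|_\infty$. Your ``(size of the set)$^r$'' would only give $2^{((r+s)\Delta)^{\O(r^2(r+s))}}$.

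The other steps are also unsupported. Writing $\epsilon^i$ as an integral remainder plus Graver elements of $D_i$ needs an integral $z$ with $D_iz=D_i\epsilon^i$. That is a lattice condition which again only $M$ can provide (cf.\ $D_i=[2]$ with $M$ odd). Moreover, the number of Graver elements is not bounded by $(s\Delta)^{\O(s)}$ on wide bricks: take $D_i=[1\ \cdots\ 1]$ and $\epsilon^i=(\tfrac12,\dots,\tfrac12)$, unless you first merge identical columns.

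The distribution step, which you yourself call the main obstacle, is where the entire content of the lemma lies. For $k=2$ and $x_j=1$ one has $\lfloor x_j/2\rfloor=0$, so there is no slack below. Since Graver directions generally have mixed signs, ``conformality together with $k\ge1$'' gives no quantitative control.

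What is missing is an extraction argument that works directly on the multiset of columns encoded by $x$ and keeps $0\le y\le x$ by construction rather than by rounding. One example of such an argument: a Steinitz/pigeonhole step that carves out disjoint sub-multisets whose images are exact multiples $j\cdot(\text{target})$ with $j\le J$. This would be combined with $\mathrm{lcm}(1,\dots,J)\mid M$ to assemble exactly $M$ times the target, first within bricks and then over the bounded set of top-level types $B_ig$.
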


Together, \cref{lemma:n-fold-scaled-idp} and \cref{proposition:idp-iff-convex-extensibility-along-line} show that if $A\in\nfold(r,s,\Delta)$, then the function
\[
    h(z)=\min\bigl\{f(x)\bigm\vert Ax=r+Mz,\,x\in\Z^n\bigr\}
\]
is convex extensible on integer lines for a parameterized value of $M$. This is exploited in~\cite{ligthart2026valuefunctionsseparableconvex} to give an algorithm that solves 4-block IPs with exactly one global variable. The (more general) algorithmic framework, also employed to tackle other parameterized integer programming problems in~\cite{ligthart2026valuefunctionsseparableconvex}, is formulated in \cref{lemma:fixed-phase-value-function-reformulation}. It is based on value function reformulations of (\ref{eq:generic-2-stage-ip}) and a remainder guessing strategy originating from~\cite{DBLP:journals/theoretics/CslovjecsekKLPP25}:

\begin{lemma}[Lemma 18 in~\cite{ligthart2026valuefunctionsseparableconvex} (for $d=m$)]
    \label{lemma:fixed-phase-value-function-reformulation}
    Let $A\in\Z^{m\times n}$, $p,U\in\Z_{\ge0}$, and $M\in\Z_{\ge1}$ be such that:
    \begin{itemize}
        \item the value function $h\colon\Z^m\to\R\cup\{-\infty,\infty\}$ given by
        \begin{equation}
            \label{eq:value-function-assumed-convex-extensibility}
            h(z)=\min\bigl\{f(x)\bigm\vert[A,I]x=r+Mz,\,x\in\Z^{n+m}\bigr\}
        \end{equation}
        is convex extensible on $\Z^m\cap S$ for any $r\in\Z^m$, linear subspace $S$ with dimension at most $p$, and separable convex objective function $f\colon\R^{n+m}\to\R\cup\{\infty\}$,
        \item an optimal solution to
        \begin{equation}
            \label{eq:generic-second-stage-ip}
            \min\bigl\{f(x)\bigm\vert[A,I]x=b,\,l\le x\le u,\,x\in\Z^{n+m}\bigr\},
        \end{equation}
        can be found in time $T_{\mathrm{opt}}(\sigma)$ for any $l,u\in\Z^{n+m}$, $b\in\Z^m$, and separable convex $f\colon\R^{n+m}\to\R$ assuming that: an initial feasible solution is provided, $\|u-l\|_\infty\le\sigma$, and $f$ is accessible through a comparison oracle on $\Z^{n+m}$,
        \item the product $Ax\in\Z^m$ can be computed in time $T_{\mathrm{mul}}$ for any $x\in\Z^n$,
        \item the induced matrix norm $\|A\|_\infty$ is bounded by $U$ from above.
    \end{itemize}
    Let $C\in\Z^{m\times p}$, $l,u\in\Z^n$, $b\in\Z^m$, $V\in\R^{\ell\times p}$, $w\in\R^\ell$, $c\in\R^p$, and $\rho>0$ be such that $\{y\in\R^p:Vy\le w\}\subseteq\{y\in\R^p:\|y-c\|_\infty\le\rho\}$. Let $g\colon\R^p\to\R$ be convex and $f\colon\R^n\to\R$ be separable convex and let both be accessible through a comparison oracle on $\Z^{p+n}$. Then an optimal solution to
    \begin{equation}
        \label{eq:generic-2-stage-ip}
        \min\bigl\{g(y)+f(x)\bigm\vert Cy+Ax=b,\,Vy\le w,\,y\in\Z^p,\,l\le x\le u,\,x\in\Z^n\bigr\}
    \end{equation}
    can be found in time
    \[
        M^p\cdot2^{\O(p^2\log p)}\cdot\Bigl(\log^{\O(1)}(\rho)+\log(\rho)\cdot\bigl(T_{\mathrm{opt}}((1+2U)\|u-l\|_\infty)+T_{\mathrm{mul}}+m+\ell\bigr)\Bigr).
    \]
\end{lemma}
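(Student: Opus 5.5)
The plan is to reduce \eqref{eq:generic-2-stage-ip} to a family of convex integer minimization problems in dimension $p$, one for each residue class of $y$ modulo $M$. For each of the $M^p$ vectors $y_0\in\{0,\dots,M-1\}^p$ we restrict to $y=y_0+Mz$ with $z\in\Z^p$. The inner problem in $x$ then has right-hand side
\[
  b-Cy_0-MCz=r+M\cdot(-Cz), \qquad r:=b-Cy_0 .
\]
The bounds $l\le x\le u$ are folded into $f$ by setting $f=\infty$ outside the box. We also add the identity columns with slack variables: fixed to $0$ by their domains, or carrying a large penalty when we need a trivially feasible starting point. This turns the inner problem into an instance of the value function $h$ from \eqref{eq:value-function-assumed-convex-extensibility}. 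The outer problem for fixed $y_0$ is
\[
  \min\{\,g(y_0+Mz)+h(-Cz)\ \vert\ V(y_0+Mz)\le w,\ z\in\Z^p\,\}.
\]

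The first key step is to show that the outer objective is convex extensible on $\Z^p$. The map $z\mapsto -Cz$ sends $\Z^p$ into $\Z^m\cap S$ with $S=\operatorname{im}(C)$, a linear subspace of dimension at most $p$. By the first hypothesis, $h$ has a convex extension $\hat h$ on $\Z^m\cap S$, so $\hat h(-Cz)$ is a convex extension of $z\mapsto h(-Cz)$. Adding the convex function $z\mapsto g(y_0+Mz)$ keeps convexity. The feasible region $\{z:V(y_0+Mz)\le w\}$ is a polyhedron contained in an $\ell_\infty$-ball of radius about $\rho/M$ around $(c-y_0)/M$. We therefore invoke a convex integer minimization algorithm in fixed dimension, in the Lenstra/Dadush or Oertel--Wagner--Weismantel style, using only comparison/evaluation access. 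It needs $2^{\O(p^2\log p)}\cdot\log^{\O(1)}\rho$ arithmetic work and $2^{\O(p^2\log p)}\cdot\log\rho$ evaluations. Checking the constraints $Vy\le w$ during this search accounts for the $+\ell$ term.

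The second key step, which I expect to be the main obstacle, is evaluating $h(-Cz)$ within the claimed time using the second hypothesis. That hypothesis solves the inner problem only when an initial feasible solution is given and the box has width at most $\sigma$.
\begin{enumerate}
  \item Any point of the box, e.g.\ $x=l$, together with slack $s=b-Cy-Al$ in the identity columns is feasible. Computing $s$ costs $T_{\mathrm{mul}}+m$.
  \item We penalize the slacks steeply enough that an optimum uses zero slack whenever the true problem is feasible.
  \item The box for the slacks must be controlled. Since $\|A\|_\infty\le U$, every slack value that can occur lies within $U\|u-l\|_\infty$ of the starting one, giving slack boxes of width $(1+2U)\|u-l\|_\infty$.
\end{enumerate}
This yields the $T_{\mathrm{opt}}((1+2U)\|u-l\|_\infty)$ term per evaluation. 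Some care is needed to keep the penalized objective separable convex and to detect infeasibility, returning $\infty$. The penalty magnitude must be handled through the comparison oracle, for example by lexicographic comparison, rather than as an explicit number.

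Finally, we take the best solution over all $M^p$ residues $y_0$ and recover the corresponding $x$ from the last inner solve. Multiplying the $M^p$ residue classes by the per-class cost gives the stated bound.
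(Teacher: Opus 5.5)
Your architecture matches the one the paper relies on via Ligthart. You guess $y$ modulo $M$, write $y=y_0+Mz$, and treat the second stage as the slack value function with offset $r=b-Cy_0$. Convexity in $z$ comes from the first hypothesis with $S=\mathrm{im}(C)$, and you run a fixed-dimension comparison-oracle convex minimizer whose evaluations are single $T_{\mathrm{opt}}$ calls started from $(l,\,b-Cy-Al)$. Your box-width and cost accounting are fine.

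\textbf{The gap is how you handle infeasible $z$.} You return $\infty$ whenever the penalized inner optimum has nonzero slack. The outer minimizer therefore sees the true value function $z\mapsto g(y_0+Mz)+h(-Cz)$ with infinite values. That function is convex extensible, but a comparison oracle gives no directional information where it equals $\infty$. Suppose exactly one $z^*$ in the region is feasible, e.g.\ $m=n=p=1$, $A=C=[1]$, $l=u$. Then every query other than $z^*$ answers ``equal''. An adversary can force $\Omega(\rho/M)$ queries, which destroys the $\log\rho$ factor.

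\textbf{The repair} is what the remark before \cref{lemma:strengthened-fixed-phase-value-function-reformulation} describes. The proof only uses inner objectives of the form finite separable part plus $\alpha\|x_2\|_1$ plus the box indicator $\xi(x_1)$, which make the slack value function finite on all of $\Z^m$. Concretely:
\begin{itemize}
\item Give the outer minimizer the \emph{penalized} value $g(y_0+Mz)+h_\alpha(-Cz)$. It is finite on the whole region.
\item It is convex extensible by the same first hypothesis, because $\alpha\|x_2\|_1$ plus the box indicator is separable convex.
\item Discard a residue class only if the final minimizer has nonzero slack.
\end{itemize}

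For this to be sound, the penalty must dominate the variation of $g+f$ over the entire bounded region. Your step 2 only makes it dominate $f$ at a fixed $z$, so a $z$ with positive slack but much smaller $g$ could win. Comparing lexicographically by (total slack, $g+f$) fixes this. This is legitimate because all compared points lie in a finite set. On that set the lexicographic order coincides with the order of $g+f+\alpha\|x_2\|_1$ for some finite $\alpha$, so both of your hypotheses apply to it.
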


It is noted in~\cite{ligthart2026valuefunctionsseparableconvex} that if one can meet the convex extensibility condition when (\ref{eq:value-function-assumed-convex-extensibility}) for $A\in\nfold(r,s,\Delta)$ for any fixed dimension $p$, this would extend the 4-block algorithm from one global variable to any parameterized number of variables. Since convex extensibility along $\Z^n\cap L$ for any line $L\subseteq\R^n$ implies integer midpoint convexity, we can now close this gap by using \cref{lemma:2d-dilation}.

First, we make an important observation regarding the proof of \cref{lemma:fixed-phase-value-function-reformulation} in~\cite{ligthart2026valuefunctionsseparableconvex}. The proof only uses the convex extensibility of (\ref{eq:value-function-assumed-convex-extensibility}) on $\Z^m\cap S$ for choices of $f$ that guarantee that the value function
\begin{equation}
    \label{eq:value-function-with-slack}
    b\mapsto\min\bigl\{f(x)\bigm\vert[A,I]x=b,\,x\in\Z^{n+m}\bigr\}
\end{equation}
is finite on the entirety of $\Z^m$. The particular choice of $f$ in~\cite{ligthart2026valuefunctionsseparableconvex} is given by $f(x_1,x_2)=g(x_1)+\delta(x_2)+\xi(x_1)$, where $g\colon\R^n\to\R$ is finite, $\delta(x_2)=\alpha\cdot\|x_2\|_1$ for $\alpha>0$, and $\xi(x)$ is the indicator function that is $0$ if $l\le x\le u$ (for some $l,u\in\Z^n$, $l\le u$) and $\infty$ otherwise. The combination of $\delta\ge0$ and $\xi$, which bounds the effective domain of $x_1$, guarantees that $h>-\infty$. Additionally, since $l\le u$, the system $[A,I]x=b$ always has one solution $(l,b-Al)$ with finite objective value, which guarantees that $h<\infty$.

\begin{lemma}
    \label{lemma:strengthened-fixed-phase-value-function-reformulation}
    \cref{lemma:fixed-phase-value-function-reformulation} holds even if we only demand the midpoint integer convexity of (\ref{eq:value-function-assumed-convex-extensibility}).
\end{lemma}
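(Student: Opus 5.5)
The plan is to reduce the weakened hypothesis to the original one by rescaling $M$. Assume the value function $h$ from (\ref{eq:value-function-assumed-convex-extensibility}) is integer midpoint convex on $\Z^m$ for every $r\in\Z^m$ and every admissible $f$. We only need this for the choices of $f$ that the proof of \cref{lemma:fixed-phase-value-function-reformulation} actually uses, namely $f(x_1,x_2)=g(x_1)+\delta(x_2)+\xi(x_1)$. As argued above, for these $f$ the value function (\ref{eq:value-function-with-slack}) is finite everywhere. Hence $h$ is a genuine function $\Z^m\to\R$, so \cref{lemma:2d-dilation} applies to it; that lemma requires real values, which is why the finiteness remark matters.

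Set $k=2p$ (or $\max\{1,2p\}$) and $M'=kM$. Fix $r\in\Z^m$, a subspace $S$ of dimension at most $p\le\tfrac12 k$, and such an $f$. The value function associated with dilation $M'$ is
\[
h'(z)=\min\bigl\{f(x)\bigm\vert[A,I]x=r+M'z,\,x\in\Z^{n+m}\bigr\}=h(kz),
\]
where $h$ is the value function for dilation $M$ with the same $r$. Applying \cref{lemma:2d-dilation} to the integer midpoint convex $h$ with this $k$ and $S$ shows that $z\mapsto h(kz)=h'(z)$ is convex extensible on $\Z^m\cap S$. So the first hypothesis of \cref{lemma:fixed-phase-value-function-reformulation} holds with $M'$ in place of $M$, at least for every $f$ the proof invokes. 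The remaining hypotheses do not involve $M$ and are unchanged.

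We then invoke \cref{lemma:fixed-phase-value-function-reformulation} with $M'$. The running time becomes $(2pM)^p\cdot2^{\O(p^2\log p)}\cdots=M^p\cdot2^{\O(p^2\log p)}\cdots$, since $(2p)^p=2^{\O(p\log p)}$ is absorbed. This gives the same bound as stated.

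The main obstacle is justifying that the proof in~\cite{ligthart2026valuefunctionsseparableconvex} uses convex extensibility only for these finite-valued instances. This matters because both integer midpoint convexity and \cref{lemma:2d-dilation} need $h$ to be real-valued, and with $\pm\infty$ values the midpoint inequality is not even meaningful. I would handle this by restating the paragraph preceding the lemma: $\delta\ge0$ together with the box $\xi$ gives $h>-\infty$, and the point $(l,b-Al)$ gives $h<\infty$. I would also check that the midpoint convexity hypothesis is stated for the same family of $r$ and $f$, so the rescaled $h'$ uses an allowed offset.
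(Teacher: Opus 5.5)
Your proposal is correct and matches the paper's proof: apply \cref{lemma:2d-dilation} with $k=2p$ to the value function, which is real-valued for the choices of $f$ that the proof of \cref{lemma:fixed-phase-value-function-reformulation} actually uses. This makes the value function for $M'=2pM$ convex extensible on $\Z^m\cap S$, and the extra factor $(2p)^p=2^{\O(p\log p)}$ is absorbed into $2^{\O(p^2\log p)}$. Your $\max\{1,2p\}$ guard, which keeps $M'\ge1$ when $p=0$, is a small detail the paper leaves implicit.
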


\begin{proof}
    \cref{lemma:2d-dilation} directly shows that $h'\colon\Z^d\to\R$ given by
    \[
        h'(z)=\min\bigl\{f(x)\bigm\vert[A,I]x=r+(2pM)\cdot z,\,x\in\Z^{n+m}\bigr\}
    \]
    is convex extensible on $\Z^m\cap S$ for any $r\in\Z^m$, linear subspace $S$ with dimension at most $p$, and separable convex objective function $f\colon\R^{n+m}\to\R\cup\{\infty\}$ so that (\ref{eq:value-function-with-slack}) is finite. Hence, we may apply \cref{lemma:fixed-phase-value-function-reformulation} for $M'=2pM$ to solve (\ref{eq:generic-2-stage-ip}). Note that the running time factor that depends on $p$ remains
    \[
        (2pM)^p\cdot2^{\O(p^2\log p)}=M^p\cdot2^{\O(p\log p)}\cdot2^{\O(p^2\log p)}=M^p\cdot2^{\O(p^2\log p)}.\qedhere
    \]
\end{proof}

Now \cref{thm:4-block-fpt} is a straightforward extension of the 4-block algorithm for a single global variable from~\cite{ligthart2026valuefunctionsseparableconvex} to a fixed number of global variables.

\thmfourblockfpt*

\begin{proof}
    Write $l=(l_0,\hat l)$ and $u=(u_0,\hat u)$. We apply \cref{lemma:strengthened-fixed-phase-value-function-reformulation} for the constraint matrices
    \[
        C=\begin{bmatrix}
            A_0\\
            C_1\\
            \vdots\\
            C_n
        \end{bmatrix},\text{ and }A=\begin{bmatrix}
            B_1&\cdots&B_n\\
            D_1&&\\
            &\ddots&\\
            &&D_n
        \end{bmatrix}.
    \]
    We encode the global variable bounds $l_0\le y\le u_0$ in $Vy\le w$ with $\ell=2p$ constraints. We directly insert the right hand side vector $b$, the $n$-fold variable bounds $\hat l,\hat u$, and the objective function $g(y)+f'(x)=f(y,x)$ into \cref{lemma:strengthened-fixed-phase-value-function-reformulation}. We can take $\rho\le\|u_0-l_0\|_\infty\le\|u-l\|_\infty$. A product $Ax$ can be computed in time $T_{\mathrm{mul}}=\O((r+s)tn)$ and $\|A\|_\infty\le\Delta tn=:U$. Note that $[A,I]\in\nfold(r,s,\Delta)$ (after permuting columns). Therefore, we can solve (\ref{eq:generic-second-stage-ip}) using the dual treedepth algorithm from~\cite{DBLP:conf/ipco/HunkenschroderKLV25}, which runs in time $T_{\mathrm{opt}}(\sigma)=(rs\Delta)^{\O(rs(r+s))}\cdot tn\log(tn)\cdot\log\sigma$. \cref{lemma:n-fold-scaled-idp} shows that $[A,I]$ has the IDP after an $M$-dilation where $M=2^{\O((r+s)\Delta)^{r(r+s)}}$. Ligthart~\cite{ligthart2026valuefunctionsseparableconvex} notes that the value of $M$ can be computed in $\O(M)$ time. \cref{proposition:idp-iff-convex-extensibility-along-line} shows that (\ref{eq:value-function-assumed-convex-extensibility}) is convex extensible on any line, which, in particular, implies that it is integer midpoint convex. Now \cref{lemma:strengthened-fixed-phase-value-function-reformulation} shows that we can solve the 4-block IP in time
    \begin{alignat*}{2}
        &M^p\cdot{}&&2^{\O(p^2\log p)}\cdot\Bigl(\log^{\O(1)}(\rho)+\log(\rho)\cdot\bigl(T_{\mathrm{opt}}((1+2U)\|u-l\|_\infty)+T_{\mathrm{mul}}+(r+sn)+\ell\bigr)\Bigr)\\
        ={}&\mathrlap{\bigl(2^{\O((r+s)\Delta)^{r(r+s)}}\bigr)^p\cdot2^{\O(p^2\log p)}\cdot\Bigl(\log^{\O(1)}(\|u-l\|_\infty)}&&\\
        &&&\cdot\bigl(T_{\mathrm{opt}}((1+2\Delta tn)\|u-l\|_\infty)+\O((r+s)tn)+(r+sn)+2p\bigr)\Bigr)\\
        ={}&\mathrlap{2^{\O((r+s)\Delta)^{r(r+s)}\cdot p+p^2\log p}\cdot\Bigl(\log^{\O(1)}(\|u-l\|_\infty)\cdot\bigl((rs\Delta)^{\O(rs(r+s))}\cdot tn\log(tn)}&&\\
        &&&\cdot\log((1+2\Delta tn)\|u-l\|_\infty)+\O((r+s)tn)+(r+sn)+2p\bigr)\Bigr)\\
        ={}&\mathrlap{2^{\O((r+s)\Delta)^{r(r+s)}\cdot p+p^2\log p}\cdot\Bigl(\log^{\O(1)}(\|u-l\|_\infty)\cdot\bigl(tn\log(tn)}&&\\
        &&&\cdot\bigl(\log(\Delta)+\log(tn)+\log(\|u-l\|_\infty)\bigr)+\O((r+s)tn)+(r+sn)+2p\bigr)\Bigr)\\
        ={}&\mathrlap{2^{\O((r+s)\Delta)^{r(r+s)}\cdot p+p^2\log p}\cdot tn\log^{\O(1)}(tn)\cdot\log^{\O(1)}\|u-l\|_\infty.}&&\tag*{\qedhere}
    \end{alignat*}
\end{proof}

\subsection{Other implications}

We now list a number of parameterized complexity results that are derived by combining~\cref{lemma:strengthened-fixed-phase-value-function-reformulation} with some known techniques and results.

If the objective function $f$ is a linear function $f(x)=c^\top x$ with explicitly given coefficients~$c\in\R^n$ and if all constraint matrix coefficients are bounded by $\Delta$, then the 4-block IP can be solved in strongly FPT time, with a number of arithmetic operations that is independent of $\|u-l\|_\infty$. This can be obtained via bounding the domain of the variables by first solving the continous relaxation, using the strongly polynomial linear programming algorithm by Tardos~\cite{DBLP:journals/ior/Tardos86}, and then employing a proximity bound such as the one from~\cite{DBLP:journals/mp/CookGST86}.

\begin{corollary}
    Let $A$ be a 4-block matrix with $A_0\in\{-\Delta,-\Delta+1,\dots,\Delta\}^{r\times p}$, $C_i\in\{-\Delta,-\Delta+1,\dots,\Delta\}^{s\times p}$, $B_i\in\{-\Delta,-\Delta+1,\dots,\Delta\}^{r\times t}$, and $D_i\in\{-\Delta,-\Delta+1,\dots,\Delta\}^{s\times t}$ for $i\in[n]$. Let $l,u\in\Z^{p+tn}$ and let $c\in\R^{p+tn}$. Then an optimal solution to
    \[
        \min\{c^\top x\ \vert\ Ax=b,\,l\le x\le u,\,x\in\Z^{p+tn}\}
    \]
    can be found in time $g(p,r,s,\Delta)\cdot(tn)^{\O(1)}$, where $g$ is a computable function.
\end{corollary}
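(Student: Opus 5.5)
The plan is to reduce the problem to \cref{thm:4-block-fpt} (or, more precisely, to its proof via \cref{lemma:strengthened-fixed-phase-value-function-reformulation}). The only obstacle is the factor $\log^{\O(1)}\|u-l\|_\infty$, which is not bounded by a function of the parameters and $tn$. I will eliminate it by replacing the bounds $l,u$ with new bounds $l',u'$ whose width depends only on $p,r,s,\Delta$ and $tn$, while keeping an optimal solution inside the new box. Since $f(x)=c^\top x$ is separable convex and a comparison oracle for it is trivially available from the explicit $c$, \cref{thm:4-block-fpt} then applies to the restricted instance.

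\textbf{Step 1: solve the LP relaxation.} I will solve $\min\{c^\top x\mid Ax=b,\ l\le x\le u\}$ over $\R^{p+tn}$ with Tardos' algorithm~\cite{DBLP:journals/ior/Tardos86}. Its number of arithmetic operations is polynomial in the dimensions and in the encoding length of the constraint matrix. That matrix consists of $A$ (entries bounded by $\Delta$) stacked with identity rows for the bounds, so the operation count is $(tn)^{\O(1)}$ times a function of the parameters, with no dependence on $b$, $l$, $u$ or $c$. If the LP is infeasible, the IP is infeasible. If the LP is unbounded, I expect to handle this as usual: infinite bounds are excluded by $l,u\in\Z^{p+tn}$, so the box is bounded and the LP cannot be unbounded. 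Let $x^*$ be an optimal vertex solution.

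\textbf{Step 2: apply a proximity bound.} By the proximity theorem of Cook et al.~\cite{DBLP:journals/mp/CookGST86}, applied to the system with the box constraints included, if the IP is feasible then some integer optimum $z^*$ satisfies $\|z^*-x^*\|_\infty\le N\cdot\delta$. Here $N=p+tn$ is the number of variables and $\delta$ is the largest absolute subdeterminant of the constraint matrix. This is the step I expect to be the main obstacle, because $\delta$ of the full $A$ could be huge, growing like $\Delta^{\text{rank}}$ with the rank up to $r+sn$. I would try two routes, in this order.

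\emph{First route.} Use the Graver-basis proximity argument instead of Cook et al. Proximity for $\min\{c^\top x\mid Ax=b,\,l\le x\le u\}$ holds with bound $N\cdot\max_{g\in\mathcal G(A)}\|g\|_\infty$. The difficulty is that Graver elements of 4-block matrices can have norm $\Omega(n^r)$~\cite{DBLP:conf/esa/0009K0S20}. That is still polynomial in $n$, however, and Chen et al.\ and Hemmecke et al.\ give Graver norm bounds of the form $g(p,r,s,t,\Delta)\cdot n^{h(r)}$. This yields $\|u'-l'\|_\infty\le g\cdot (tn)^{\O(r)}$, so its logarithm is $\O(\log g+r\log(tn))$.

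\emph{Second route.} Any bound of the form $2^{\mathrm{poly}(tn)}\cdot g$ already suffices, because only $\log$ of the width enters the running time. Hadamard's inequality gives $\delta\le(\Delta\sqrt{p+tn})^{r+sn}$, whose logarithm is polynomial in $tn$ and the parameters.

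\textbf{Step 3: restrict the box and solve.} With proximity bound $\beta$ from either route, set $l'=\max(l,\lceil x^*\rceil-\beta\mathbf 1)$ and $u'=\min(u,\lfloor x^*\rfloor+\beta\mathbf 1)$. Then $\log\|u'-l'\|_\infty=(tn)^{\O(1)}\cdot\log(\text{params})$. Applying \cref{thm:4-block-fpt} to the restricted instance gives running time $2^{\O((r+s)\Delta)^{r(r+s)}p+p^2\log p}\cdot(tn)^{\O(1)}$, which is of the form $g(p,r,s,\Delta)\cdot(tn)^{\O(1)}$.

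Two details need care. The rounding of $x^*$ and the entries of $c$ are reals; comparisons between them count as arithmetic operations, so this is fine. Integer feasibility of the restricted instance is guaranteed by proximity whenever the original IP is feasible; if no feasible point is found in the restricted box, I report the IP infeasible.
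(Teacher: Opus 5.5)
Your proposal is correct and follows the paper's own route: Tardos' strongly polynomial LP algorithm, then the proximity bound of Cook et al.\ to shrink the box, then \cref{thm:4-block-fpt}. The bound $\Delta$ on $A_0$ and $C_i$ is what keeps both the first two steps within the stated time. Your second route is the detail the paper leaves implicit: by Hadamard, only $\log\delta=\O\bigl((r+sn)\log(\Delta(p+tn))\bigr)$ enters the running time.

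The Graver-norm route is therefore unnecessary. If you kept it, you would also have to check that its dependence on $t$, which is not a parameter here, stays polynomial. One small correction to your final remark: what is needed, as the paper states, is that rounding a number to an integer is a unit-cost operation, both for $\lceil x^*\rceil$ and inside Tardos' algorithm when $c$ is real. Unit-cost comparisons alone do not provide this.
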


This results uses the assumption that integers can be rounded with a single arithmetic operation, which is standard in the literature.

It is also possible to obtain FPT algorithms for 4-block IPs when large coefficients are present in the blocks $B_1,\dots,B_n$ under either i) an uniformity assumption: $B_1=\dots=B_n$ and $t$ being a parameter, or ii) the assumption that the coefficients in $B\in K^{r\times t}$ come from a set $K\subseteq\Z$ of parameterized cardinality. This follows from the fact that such a 4-block IP can be rewritten to an equivalent 4-block with small coefficients in $B_i$, which is illustrated in Section 3.2 of~\cite{ligthart2026valuefunctionsseparableconvex}, using (a variant) of the rewriting technique by Chen, Chen, and Zhang~\cite{DBLP:journals/disopt/ChenCZ22}. Note that the additional assumptions are necessary in order to not capture the NP-hard subset sum problem.

\begin{corollary}
    Let $A$ be a 4-block matrix with $A_0\in\Z^{r\times p}$, $C_i\in\Z^{s\times p}$, $B_i\in\Z^{r\times t}$, and $D_i\in\{-\Delta,-\Delta+1,\dots,\Delta\}^{s\times t}$ for $i\in[n]$. Let $l,u\in\Z^{p+tn}$ and let $f:\R^{p+tn}\to\R$ be separable convex and accessible through a comparison oracle on $\Z^{p+tn}$. Then an optimal solution to
    \[
        \min\{f(x)\ \vert\ Ax=b,\,l\le x\le u,\,x\in\Z^{p+tn}\}
    \]
    can be found in time $g(p,t,s,\Delta)\cdot r\cdot n\log^{\O(1)}(n)\cdot\log^{\O(1)}\|u-l\|_\infty$, where $g$ is a computable function.
\end{corollary}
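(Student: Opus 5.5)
The plan is to reduce the statement to \cref{thm:4-block-fpt}. I would rewrite the large-coefficient blocks $B_i\in\Z^{r\times t}$ into an equivalent 4-block IP in which every coefficient outside the global columns is bounded by a function of $(t,s,\Delta)$. The global columns $A_0,C_i$ may keep arbitrary entries, since \cref{thm:4-block-fpt} only needs $B_i,D_i$ small. The tool is the rewriting of Section~3.2 of~\cite{ligthart2026valuefunctionsseparableconvex}, a variant of the technique of Chen, Chen, and Zhang~\cite{DBLP:journals/disopt/ChenCZ22}.

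\textbf{Step 1 (new global variables).} Let $\beta^{(1)},\dots,\beta^{(q)}\in\Z^r$ be the distinct columns that occur among all $B_i$. For each $j\in[q]$ I add one new global variable $y_j$, with a new local slack in every block that uses $\beta^{(j)}$. Then $y_j$ equals the sum of the local variables multiplied by $\beta^{(j)}$. The large vector $\beta^{(j)}$ now appears only in the global column $A_0$, which may have large entries, and the linking inside each block has $\pm1$ coefficients. After this step the new instance has $p+q$ global variables, the entries of $B_i,D_i$ are bounded by $\max(\Delta,1)$, and there are $r$ global rows plus $q$ linking rows. The number of such rows must stay parameter-bounded, so I would aggregate them blockwise: the row constraints in $B_i$ become $0/\pm1$ selector rows, and the $r$ original rows move into $A_0$.

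\textbf{Step 2 (bounding $q$; the main obstacle).} Step~1 needs $q\le g(t,s,\Delta)$. The hardest point is that this bound does not follow from the hypotheses as worded. With $r=t=1$, $s=0$ and $B_i=a_i$, the statement specializes to subset sum, $\sum a_ix_i=b$ with $x\in\{0,1\}^n$. So for arbitrary $B_i$ the number of distinct columns is not bounded by a parameter, and the corollary as worded would give an algorithm for subset sum using polynomially many arithmetic operations. I would therefore first try to compress the columns modulo the lattice generated by the $D_i$-kernels, hoping the distinct residues are bounded in terms of $(t,s,\Delta)$.

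I expect this compression to fail in general, and the subset-sum example indicates it must. The proof then has to use the restriction stated in the preceding paragraph of the paper: either the uniform case $B_1=\dots=B_n$, where $q\le t$, or entries from a set $K$ of parameterized size, where $q\le|K|^{r}$ with $r$ absorbed appropriately.

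\textbf{Step 3 (apply the algorithm).} With $q$ bounded, I would apply \cref{thm:4-block-fpt}, with $p+q$ global variables and the objective extended by zero on the new variables. This keeps $f$ separable convex. The time is $g(p,t,s,\Delta)\cdot n\log^{\O(1)}(n)\cdot\log^{\O(1)}\|u-l\|_\infty$. The factor $r$ comes from reading and aggregating the $r$ rows of $A_0$ and the $B_i$. The new variables' bounds are $\|B\|_\infty\cdot t n\|u-l\|_\infty$, which changes only the logarithmic factor.
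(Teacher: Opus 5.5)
The missing hypothesis you flag is real, but your Step~3 does not deliver the claimed running time: after your rewriting, $r$ still enters the bound of \cref{thm:4-block-fpt} in the exponent.

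You are right about the hypothesis. As printed, without $B_1=\dots=B_n$, the corollary contains subset sum: take $p=s=0$, $r=t=1$, $B_i=(a_i)$ and $0\le x\le 1$. The preceding paragraph and the parameter list $(p,t,s,\Delta)$ show that case~i), uniform $B$ with $t$ a parameter, is intended. In that case your Step~1 matches the Chen--Chen--Zhang-type rewriting the paper points to. It introduces new global variables $z=\sum_i x_i\in\Z^t$ ($q\le t$), so $B$ appears only in global columns. The linking rows $\sum_i x_i-z=0$ have entries in $\{0,\pm1\}$, and the objective is extended by $0$.

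The gap is how you handle $r$, which is not a parameter. After Step~1 the $r$ rows $A_0y+Bz=b_0$ no longer involve local variables. They are still rows of the top block row of the new 4-block matrix, with zero part in the new $B_i'$. Saying they ``move into $A_0$'' does not reduce the number of global rows, and the ``aggregation'' you mention is never specified. A black-box application of \cref{thm:4-block-fpt} therefore uses $r+t$ global rows and costs
\[
2^{\O((r+t+s)\Delta)^{(r+t)(r+t+s)}\cdot(p+t)+\dots}.
\]
That is not of the form $g(p,t,s,\Delta)\cdot r$, so the factor $r$ cannot be explained as the cost of reading the input.

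You need an explicit step that takes these rows out of the parameter. The simplest uses the fact that they involve only the $p+t$ global variables. Gaussian elimination with $\O(r(p+t)^2)$ arithmetic operations either certifies infeasibility, because the right-hand side is inconsistent with the row dependencies, or keeps at most $p+t$ linearly independent rows that imply all the others. Then apply \cref{thm:4-block-fpt} with:
\begin{itemize}
\item at most $p+2t$ global rows and $p+t$ global variables;
\item entries of $B_i'$ in $\{0,1\}$ and $\Delta'=\max\{\Delta,1\}$;
\item bounds of width at most $n\|u-l\|_\infty$ on $z$.
\end{itemize}
This gives the claimed bound. Alternatively, pass the $r$ equations as $2r$ inequalities in the constraint $Vy\le w$ of \cref{lemma:strengthened-fixed-phase-value-function-reformulation}, whose running time is linear in $\ell$. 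Then $[A,I]$ sees only the $t$ selector rows.
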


\begin{corollary}
    Let $\{k_1,\dots,k_\ell\}=K\subseteq\Z$. Let $A$ be a 4-block matrix with $A_0\in\Z^{r\times p}$, $C_i\in\Z^{s\times p}$, $B_i\in K^{r\times t}$, and $D_i\in\{-\Delta,-\Delta+1,\dots,\Delta\}^{s\times t}$ for $i\in[n]$. Let $l,u\in\Z^{p+tn}$ and let $f:\R^{p+tn}\to\R$ be separable convex and accessible through a comparison oracle on $\Z^{p+tn}$. Then an optimal solution to
    \[
        \min\{f(x)\ \vert\ Ax=b,\,l\le x\le u,\,x\in\Z^{p+tn}\}
    \]
    can be found in time $g(p,r,s,\Delta,\ell)\cdot tn\log^{\O(1)}(tn)\cdot\log^{\O(1)}\|u-l\|_\infty$, where $g$ is a computable function.
\end{corollary}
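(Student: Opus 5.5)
The plan is to reduce the instance to one covered by \cref{thm:4-block-fpt}. The idea is to move every large coefficient of the blocks $B_i$ into new global columns, since \cref{thm:4-block-fpt} allows unbounded entries in $A_0$ and $C_i$. This is a variant of the rewriting of Chen, Chen, and Zhang~\cite{DBLP:journals/disopt/ChenCZ22}, as in Section~3.2 of~\cite{ligthart2026valuefunctionsseparableconvex}.

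\emph{Splitting the blocks.} For each $q\in[\ell]$ let $E_i^{(q)}\in\{0,1\}^{r\times t}$ be the indicator matrix of the positions where $B_i$ equals $k_q$. Then $B_i=\sum_{q\in[\ell]}k_qE_i^{(q)}$.

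\emph{New global variables.} Introduce new global variables $z_q\in\Z^r$ for $q\in[\ell]$, meant to equal $\sum_iE_i^{(q)}x_i$. Replace the $r$ global constraints $A_0y+\sum_iB_ix_i=b_0$ by the following two groups.
\begin{itemize}
\item The $r$ rows $A_0y+\sum_qk_qz_q=b_0$. These involve only global columns, so their large coefficients are permitted.
\item The $r\ell$ rows $\sum_iE_i^{(q)}x_i-z_q=0$. The block entries here lie in $\{0,1\}$.
\end{itemize}
The local rows keep $C_i$ unchanged, padded with zero columns for the $z_q$, and keep $D_i$ unchanged. The objective is extended by $0$ on the new variables, so it stays separable convex.

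\emph{Parameters of the new instance.} It is a 4-block IP with
\begin{itemize}
\item $p'=p+r\ell$ global variables,
\item $r'=r+r\ell$ global constraints,
\item the same $s$ and $t$,
\item new blocks $B_i'$ (the zero rows for the first group stacked on the $E_i^{(q)}$), with all entries of $B_i'$ and $D_i$ bounded by $\Delta'=\max(\Delta,1)$.
\end{itemize}
Projecting onto $(y,x)$ is a bijection between the feasible solutions of the new and old instances, because each $z_q$ is determined by $x$. It also preserves objective values, so an optimal solution of the new instance yields one of the original.

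\emph{Bounds for the new variables.} This is the step I expect to need the most care. The $z_q$ need finite bounds without blowing up the $\log\|u-l\|_\infty$ term. Since $l\le x\le u$, each coordinate of $z_q$ lies in the interval from $\bigl(\sum_iE_i^{(q)}\hat l_i\bigr)_j$ to $\bigl(\sum_iE_i^{(q)}\hat u_i\bigr)_j$. Its length is at most $tn\|u-l\|_\infty$. These bounds are valid and can be computed in $\O(r\ell tn)$ time. Hence the new domain length $\mathcal D'$ satisfies $\log\mathcal D'\le\log(tn)+\log\|u-l\|_\infty$.

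\emph{Running time.} Applying \cref{thm:4-block-fpt} gives running time
\[
2^{\O((r'+s)\Delta')^{r'(r'+s)}\cdot p'+p'^2\log p'}\cdot tn\log^{\O(1)}(tn)\cdot\log^{\O(1)}(tn\|u-l\|_\infty).
\]
The exponent depends only on $p,r,s,\Delta,\ell$. Since $\log^{\O(1)}(tn\|u-l\|_\infty)\le\log^{\O(1)}(tn)\cdot\log^{\O(1)}\|u-l\|_\infty$, this is of the claimed form $g(p,r,s,\Delta,\ell)\cdot tn\log^{\O(1)}(tn)\cdot\log^{\O(1)}\|u-l\|_\infty$. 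The construction adds only $\O(r\ell tn)$ time, which is absorbed into $g(p,r,s,\Delta,\ell)\cdot tn$.
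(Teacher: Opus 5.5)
Your proposal is correct and is essentially the paper's argument. The paper only invokes the Chen--Chen--Zhang-style rewriting (as illustrated in Section~3.2 of Ligthart) into an equivalent 4-block IP with small coefficients in the $B_i$, and then applies \cref{thm:4-block-fpt}; your aggregation variables $z_q$, with $\{0,1\}$ linking rows, the large $k_q$ moved into the global columns, and the valid domain bound $\|u'-l'\|_\infty\le tn\|u-l\|_\infty$, spell out exactly that reduction.
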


Since Ligthart~\cite{ligthart2026valuefunctionsseparableconvex} provides parameterized bounds on the dilation $M$ needed to establish the IDP for more general block-structured constraint matrices with bounded coefficients that have bounded primal or dual treedepth (which admit FPT algorithms~\cite{DBLP:conf/ipco/HunkenschroderKLV25}), \cref{lemma:strengthened-fixed-phase-value-function-reformulation} implies FPT algorithms for solving such block-structured IPs with an additional parameterized number of complicating variables having arbitrary associated integral constraint matrix coefficients. The 4-block IPs are a special case of such IPs. We refer to~\cite{DBLP:journals/mor/EisenbrandHKKLO25} for a definition of the treedepth parameters.

\begin{corollary}
    Let $C\in\Z^{m\times p}$ and $A\in\{-\Delta,-\Delta+1,\dots,\Delta\}^{m\times n}$ have primal or dual treedepth at most $d$. Let $l,u\in\Z^{p+n}$ and let $f:\R^{p+n}\to\R$ be separable convex and accessible through a comparison oracle on $\Z^{p+n}$. Then an optimal solution to
    \[
        \min\{f(x)\ \vert\ [C,A]x=b,\,l\le x\le u,\,x\in\Z^{p+n}\}
    \]
    can be found in time $g(p,d,\Delta)\cdot n\log^{\O(1)}(n)\cdot\log^{\O(1)}\|u-l\|_\infty$, where $g$ is a computable function.
\end{corollary}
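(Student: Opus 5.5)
The plan is to mirror the proof of \cref{thm:4-block-fpt}. We apply \cref{lemma:strengthened-fixed-phase-value-function-reformulation} with the complicating matrix $C\in\Z^{m\times p}$ and the structured matrix $A\in\{-\Delta,\dots,\Delta\}^{m\times n}$ of primal or dual treedepth at most $d$. Write $l=(l_0,\hat l)$ and $u=(u_0,\hat u)$. We encode $l_0\le y\le u_0$ as $Vy\le w$ with $\ell=2p$ rows, and take $c=\tfrac12(l_0+u_0)$ and $\rho\le\|u-l\|_\infty$. The separable objective splits as $f(y,x)=g(y)+f'(x)$ with $g$ separable and hence convex. Four hypotheses of the lemma remain to be verified: integer midpoint convexity of the dilated value function, the optimization oracle $T_{\mathrm{opt}}$, the multiplication time $T_{\mathrm{mul}}$, and the norm bound $U$.

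\textbf{Structural step.} First observe that appending the identity keeps us in the class. For dual treedepth, the columns of $I$ each touch a single row and so add no edges to the dual (row) graph. For primal treedepth, the new variables are leaves attached to one row each, which raises the treedepth by at most one. Next, the class of matrices of primal (resp.\ dual) treedepth at most $d+1$ with entries bounded by $\Delta$ is closed under inverting columns. Ligthart~\cite{ligthart2026valuefunctionsseparableconvex} provides a dilation $M=M(d,\Delta)$ after which this class has the IDP. \cref{proposition:idp-iff-convex-extensibility-along-line} then gives convex extensibility of (\ref{eq:value-function-assumed-convex-extensibility}) on every integer line, and in particular integer midpoint convexity. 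This is exactly the hypothesis required by \cref{lemma:strengthened-fixed-phase-value-function-reformulation}. Internally that lemma uses \cref{lemma:2d-dilation} with dilation $2pM$, so the factor $M^p2^{\O(p^2\log p)}$ depends only on $p,d,\Delta$.

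\textbf{Algorithmic ingredients.} For $T_{\mathrm{opt}}$ we use the FPT algorithm of~\cite{DBLP:conf/ipco/HunkenschroderKLV25} for separable convex IPs with bounded primal or dual treedepth and bounded coefficients. Given an initial feasible solution, it runs in time $g'(d,\Delta)\cdot n\log^{\O(1)}(n)\cdot\log\sigma$; here $\sigma=(1+2U)\|u-l\|_\infty$ with $U\le\Delta n$, so $\log\sigma$ is $\O(\log n+\log\Delta+\log\|u-l\|_\infty)$. For $T_{\mathrm{mul}}$ and the additive $m$ term we need $m$ and the number of nonzeros to be near-linear in $n$. Under bounded treedepth and bounded coefficients, standard preprocessing (deleting linearly dependent rows, e.g.\ via the block structure given by the treedepth decomposition) makes $m\le n$ and the number of nonzeros $g''(d)\cdot n$. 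Then $Ax$ is computed in $\O(g''(d)n)$ time.

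\textbf{Main obstacle.} I expect the main difficulty to be this last bookkeeping point: guaranteeing that $m$ and the sparsity of $A$ are bounded so that the running time is $n\log^{\O(1)}(n)$ rather than merely polynomial. For primal treedepth, each row is supported on a root-to-leaf path of the decomposition, so it has at most $d$ nonzeros. For dual treedepth, I would first remove rows redundant within each block, using that the blocks have bounded size, to ensure $m\le n$. Substituting all quantities into the running time bound of \cref{lemma:fixed-phase-value-function-reformulation}, exactly as in the final calculation of \cref{thm:4-block-fpt}, yields $g(p,d,\Delta)\cdot n\log^{\O(1)}(n)\cdot\log^{\O(1)}\|u-l\|_\infty$.
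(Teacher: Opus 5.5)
Your proposal is correct and takes the paper's route. The paper justifies this corollary only by the paragraph preceding it: Ligthart's parameterized IDP dilation for bounded primal or dual treedepth is turned into integer midpoint convexity via \cref{proposition:idp-iff-convex-extensibility-along-line}, and, together with the treedepth solver of~\cite{DBLP:conf/ipco/HunkenschroderKLV25}, is fed into \cref{lemma:strengthened-fixed-phase-value-function-reformulation}, exactly as in \cref{thm:4-block-fpt}; your checks that appending $I$ stays in the class and that $m$ must be controlled are details the paper leaves implicit.

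On that bookkeeping, the block-wise redundancy removal is unnecessary and does not by itself give $m\le n$. Here is what does work. In the dual case every column of $A$ has at most $d$ nonzeros, so at most $dn$ rows meet $A$. In the primal case rows have at most $n\cdot 2^{d-1}(2\Delta+1)^d$ distinct $A$-parts, so subtracting rows with equal $A$-part leaves at most that many rows meeting $A$. In both cases the remaining rows are supported only on $C$ and reduce to at most $p$ independent ones, giving $m\le g(d,\Delta)\cdot n+p$.
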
